\newtheorem{lemma}{Lemma}
\newcolumntype{L}[1]{>{\raggedright\let\newline\\\arraybackslash\hspace{0pt}}m{#1}}
\newcolumntype{C}[1]{>{\centering\let\newline\\\arraybackslash\hspace{0pt}}m{#1}}
\newcolumntype{R}[1]{>{\raggedleft\let\newline\\\arraybackslash\hspace{0pt}}m{#1}}
\def\ps@headings{%
\def\@oddhead{\mbox{}\scriptsize\rightmark \hfil \thepage}%
\def\@evenhead{\scriptsize\thepage \hfil \leftmark\mbox{}}%
\def\@oddfoot{}%
\def\@evenfoot{}}
\newcommand{\sol}{\texttt{DRAPS}}
\begin{document}

%don't want date printed
\date{}

%make title bold and 14 pt font (Latex default is non-bold, 16 pt)

\title{\huge DRAPS: Dynamic and Resource-Aware Placement Scheme for Docker Containers in a Heterogeneous Cluster}

\author{
\IEEEauthorblockN{
Ying Mao\IEEEauthorrefmark{1},
Jenna Oak\IEEEauthorrefmark{1},
Anthony Pompili\IEEEauthorrefmark{1},
Daniel Beer\IEEEauthorrefmark{1},
Tao Han\IEEEauthorrefmark{2} 
Peizhao Hu\IEEEauthorrefmark{3}
}
\IEEEauthorblockA{\IEEEauthorrefmark{1}Department of Computer Science,
The College of New Jersey, Email: \{maoy, oakj1, pompila1, beerd1\}@tcnj.edu}
\IEEEauthorblockA{\IEEEauthorrefmark{2}Department of Electrical \& Computer Engineering,
University of North Carolina at Charlotte, Email: tao.han@uncc.edu}
\IEEEauthorblockA{\IEEEauthorrefmark{3}Department of Computer Science,
Rochester Institute of Technology, Email: ph@cs.rit.edu}
}

\maketitle

\begin{abstract}
Virtualization is a promising technology that has
facilitated cloud computing to become the next wave of the Internet revolution.
Adopted by data centers, millions of applications
that are powered by various virtual machines improve the quality
of services. Although virtual machines are well-isolated among
each other, they suffer from redundant boot volumes and slow
provisioning time. To address limitations, containers were born
to deploy and run distributed applications without launching
entire virtual machines. As a dominant player, Docker is an open-
source implementation of container technology. When managing
a cluster of Docker containers, the management tool, Swarmkit,
does not take the heterogeneities in both physical nodes and
virtualized containers into consideration. The heterogeneity lies
in the fact that different nodes in the cluster may have various
configurations, concerning resource types and availabilities, etc., and the demands
generated by services are varied, such as CPU-intensive (e.g.
Clustering services) as well as memory-intensive (e.g. Web services). In this
paper, we target on investigating the Docker container cluster
and developed, DRAPS, a resource-aware placement scheme to
boost the system performance in a heterogeneous cluster.
\end{abstract}

\section{Introduction}
In the past few decades, we have witnessed a spectacular information explosion over the Internet. 
Hundreds of thousands of users are consuming the Internet through various services, such as websites, mobile applications, and online games.
The service providers, at the back-end side, are supported by state-of-the-art infrastructures on the cloud, such as Amazon Web Service~\cite{aws} and
Microsoft Azure~\cite{azure}. 
Focusing on providing the services at scale, 
virtualization is one of the emerging technologies used in data centers and cloud environments to 
improve both hardware and development efficiency.

At the system level, the virtual machine is a widely-adopted virtualization method~\cite{vm}, which isolates CPU, memory, 
block I/O, network resources, etc~\cite{jithin2014virtual}.
In a large-scale system, however,
providing services through virtual machines would mean that the users are probably 
running many duplicate instances of the same OS and many redundant boot volumes~\cite{medina2014survey}. 
Recent research shows that virtual machines suffer from noticeable performance overhead, large storage requirement, and 
limited scalability~\cite{xu2014managing}. 

To address these limitations, containers were designed for deploying
and running distributed applications without launching entire virtual machines. 
Instead, multiple isolated service units of the application, 
called containers, share the host operating system and physical resources.
The concept of container virtualization is yesterday's news;
Unix-like operating systems leveraged the technology for over a decade and modern 
big data processing plforms utilize containers as a basic computing unit~\cite{wang2014fresh, wang2015omo, wang2017seina}. 
However, new containerization platforms, such as Docker, make it into the mainstream of application development.
Based on previously available open-source technologies 
(e.g. cgroup), Docker introduces a way of simplifying the tooling required to create and manage containers.
On a physical machine, containers are essentially just regular processes; in the system view, that enjoy a virtualized resource environment, not only just CPU and
memory, but also bandwidth, ports, disk i/o, etc. 

We use ``Docker run image'' command to start a Docker container on physical machines. 
In addition to the disk image that we would like to
initiate, users can specify a few options, such as ``-m'' and ``-c'', to limit a container's access to resources. 
While options set a maximum amount, resource contention still happens among containers on every host machine.
%Starting with various images, a container can provide different services, which leads to a diverse resource demands. 
%For example, a clustering service, e.g. Kmeans, may need more computational power and a logging service, e.g. Logstash, 
%may request more bandwidth. 
Upon receiving ``Docker run'' commands from clients, the cluster, as the first step, should select a 
physical machine to host those containers. The default container placement scheme, named Spread, uses a bin-pack strategy and tries to
assign a container on the node with the fewest running containers. 
While Spread aims to equally distribute tasks among all nodes, it omits two major characteristics of the system. First of all,
the nodes in a cluster do not necessary have to be identical with each other. It is a common setting to have multiple node types, 
in terms of total resource, in the cluster. For example, a cutting edge server can easily run more processes concurrently than a off-the-shelf desktop.
Secondly, the resource demands from containers are different. 
Starting with various images, services provided by containers are varied, which leads to a diverse resource demands. 
For instance, a clustering service, e.g. Kmeans, may need more computational power and a logging service, e.g. Logstash, 
may request more bandwidth. 

In this project, we propose a new container placement scheme, \sol, a Dynamic
and Resource-Aware Placement Scheme. Different from the default Spread scheme, \sol~ assigns containers based on current available 
resources in a heterogeneous cluster and dynamic demands from containers of various services. First, \sol~ identifies
the dominant resource type of a service by monitoring containers that offer this service. It, then, places the containers with 
complementary needs to the same machine in order to reduce the balance resource usages on the nodes. 
If one type of resource, finally, becomes a bottleneck in the system, it migrates the resource-intensive containers to other nodes.
Our main contributions are as follows:
%\vspace{-0.2in}
\begin{itemize}
 \item We introduce the concept of dominant resource type that considers the dynamic demands from different services.
 \item We propose a complete container placement scheme, \sol, which assigns the tasks to appropriate nodes and balance resource usages
 in a heterogeneous cluster.
 \item We implement \sol~ into the popular container orchestration tool, Swarmkit, and conduct the experiment with 18 services in 4 types.
 The evaluation demonstrates that \sol~ outperforms the default Spread and reduces usage as much as 42.6\% on one specific node.
 
\end{itemize}

\section{Related Work}
Virtualization serves as one of the fundamental technologies in cloud computing systems. 
%In a virtualized computing environment, containers play an important role.
As a popular application, virtual machines (VMs) have been studied for decades.
However, in the reality, VMs suffer from noticeable
performance overhead, large storage requirement, and limited scalability~\cite{xu2014managing}.
More recently, containerization, a lightweight virtualization technique,  
is drawing increasing popularity from different aspects and on different 
platfroms~\cite{secpod, men2012interface, 
cheng2014efficiently, cheng2013qbdj, cheng2016efficient, cheng2014efficient, edos, bhimani2016, bhimani2017,tang2012gpu, du2015gpu, zhao2012mesh}.

The benefits and challenges of containerized systems have been studied in many aspects.
A comprehensive performance study is presented in ~\cite{felter2015updated}, where it explores the traditional
virtual machine deployments, and contrast them with the use
of Linux containers. The evaluation focuses on overheads and experiments 
that show containers' resulting performance to be equal or superior to VMs performances.
Although containers outperform VMs, 
the research~\cite{slacker} shows that the startup latency is considerably larger than expected.
This is due to a layered and distributed image architecture, in which copying package data accounts
for most of container startup time. The authors propose Slacker which can significantly reduce the startup latency.
While Slacker reduces the amount of copying and transferring packages, if the image is locally available, the
startup could be even faster. CoMICon~\cite{nathan2017comicon} addresses the problem by sharing the image in a cooperative manner. From different aspect, SCoPe~\cite{scope} tries to manage the provisioning time for large scale containers. 
It presents a statistical model, used to guide provisioning strategy, to characterize the provisioning time in terms of system features.

Besides the investigations on standalone containers, the cluster of containers is another important aspect in this field.
Docker Swarmkit~\cite{swarmkit} and Google Kubernetes~\cite{bernstein2014containers} are dominant cluster management tools in the market.
The authors of ~\cite{gog2016firmament}, first, conduct a comparison study of scalabilities under both of them. Then, firmament is proposed
to achieve low latency in large-scale clusters by using multiple min-cost max-flow algorithms. 
On the other hand, focusing on workload scheduling, the paper~\cite{kaewkasi2017improvement} describes an Ant Colony Optimization algorithm
for a cluster of Docker containers. However, the algorithm does not distinguish various containers, which usually have a divese requirements.

In this paper, we investigate the container orchestration in the prospective of resource awareness. 
While users can set limits on resources, containers are still competing for resources in a physical machine.
Starting from different images, the containers target various services, which results in different requirements on resources.
Through analyzing the dynamic resource demands, our work studies a node placement scheme that balance the resource usages in a 
heterogeneous cluster. 

%From the network security aspect, 
%authors in~\cite{wang2015benefit} build a virtual environment based on containers. The research shows that
%although the container performs well in terms of performance, it is in securable in
%terms of security control.

%Containerized applications have been studied in many research projects.
%becoming increasingly popular in both industry and academic.
%Containerization
%Due to the benefits in development and deployment brought by containers, lots of 
%big players stepped into this area, such as Amazon, Microsoft and Google, and 
%In~\cite{chung2016using}, the authors evaluate the performance
%of containers and virtual machines by running benchmarks. 
\section{Background and Motivation}
\label{back}
%In this section, we discuss the Docker containers' framework and SwarmKit, the default cluster management tool.
\subsection{Docker Containers}
%Container technology or containerization is an virtualization method for 
%deploying and running distributed applications without launching an entire virtual machine for each of them.  
%Instead, multiple isolated service units of the application, called containers, are sharing the host operating system and physical resources.
%The concept of container virtualization is yesterday's news; Unix-like operating systems leveraged the technology for over a decade. 
%However, new containerization platforms, such as Docker, make it into the mainstream of application development.
%Based on previously available open-source technologies (e.g. cgroup), Docker introduces a way of simplifying the tooling required to create and manage containers.

A Docker worker machine runs a local Docker daemon. 
New containers may be created on a worker by sending commands to its local daemon, such as ``docker
run -it ubuntu bash''.
A Docker container image is a lightweight, stand-alone, executable package of a piece of software that 
includes everything needed to run it: code, run-time, system tools, system libraries, and settings.
In general, each container targets a specific service of an application. If the application needs to scale up this particular service, 
it initiates duplicated containers by using the same image. One physical machine can host many applications with various services in a standalone mode.

\begin{comment}
\begin{figure}[ht]
\centering
\includegraphics[width=0.75\linewidth]{}
\caption{Docker Containers}
\label{fig:docker-standalone} 
\end{figure}
\end{comment}

\begin{comment}
Fig~\ref{fig:docker-standalone} illustrates the structure of a physical machine 
that is hosting four Docker containers for two applications. As the figure shows, the $AppA$ includes two services that are provided by $AppA1$ and $AppA2$ 
and $AppB$ contains one service which is provided by two Docker containers, $AppB1$ and $AppB1\textprime$.

%\begin{comment}
\begin{figure}[ht]
%\hspace*{0.8in}
   \centering
%\vspace{-0.1in}
      \begin{minipage}[t]{0.48\linewidth}
\centering
\includegraphics[width=\linewidth]{docker-standalone}
\caption{Docker Containers}
\label{fig:docker-standalone}
      \end{minipage} %
      %%%%%%%%%%%%%%%%%%%%%%%%%%%%%%%%
      \begin{minipage}[t]{0.48\linewidth}
\centering
\includegraphics[width=\linewidth]{}
\caption{Docker Swarmkit}
\label{fig:swarmkit} 
      \end{minipage} %
%\caption{Resource demonds under different workloads on four services, MySQL, Tomcat, YUM, PI.}
\end{figure}
%\end{comment}

\begin{comment}
\begin{figure}[ht]
\centering
\includegraphics[width=0.8\linewidth]{swarmkit}
\caption{Docker Swarmkit}
\label{fig:swarmkit} 
\end{figure}
\end{comment}

\subsection{Container Orchestration}
When deploying applications into a production environment, it's difficult to achieve resilience and scalability on a single container host. 
Typically, a multi-node cluster is used to provide the infrastructures for running containers at scale. Introduced by Docker, SwarmKit is an open source toolkit
for container orchestration in the cluster environment. 

There are two types of nodes in a cluster that are running SwarmKit, worker nodes, and manager nodes.
Worker nodes are responsible for running tasks; on the other hand, 
manager nodes accept specifications from the user and are responsible for reconciling the desired state with the actual cluster state.
%Fig.~\ref{fig:swarmkit} shows the decentralized architecture of a SwarmKit cluster. 
%A manager node is in charge of several worker nodes and there is a overlap between 
%manager nodes to tolerate failures. Worker and manager nodes are equal in the system since a worker node can be promoted to a manager and a manager node 
%can be demoted to a worker. Manager nodes are formed into a Raft consensus group to maintain global cluster's states. 
%The Raft consensus algorithm is used to ensure that all the manager nodes that are in charge of managing and 
%scheduling tasks in the cluster are storing the same consistent states. 

A Docker container can be initiated with specific requirements (e.g. memory and CPU) and user-defined labels.
The scheduler that runs on a manager combines the user-input information with states of each node to make various scheduling decisions, 
such as choosing the best node to perform a task. 
Specifically, it utilizes filters and scheduling strategies to assign tasks. 
There are four filters available.
\emph{ReadyFilter:} checks that the node is ready to schedule tasks;
\emph{ResourceFilter:} checks that the node has enough resources available to run;
\emph{PluginFilter:} checks that the node has a specific volume plugin installed.
\emph{ConstraintFilter:} selects only nodes that match certain labels.
If there are multiple nodes that pass the filtering process, 
SwarmKit supports three scheduling strategies: spread (currently available), binpack, and random (under development based on Swarm Mode).
\emph{Spread strategy:} places a container on the node with the fewest running containers.
\emph{Binpack strategy:} places a container onto the most packed node in the cluster.
\emph{Random strategy:} randomly places the container into the cluster.

\begin{figure}[ht]
\centering
\includegraphics[width=0.8\linewidth]{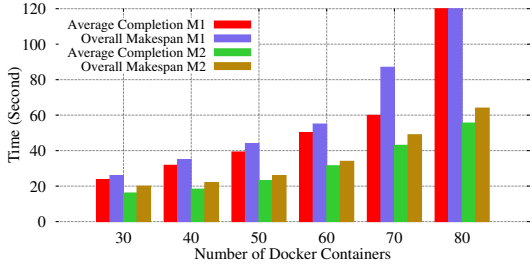}
\caption{Starting Dockers on a single machine}
\label{fig:start-single} 
\end{figure}

The default spread strategy, which 
attempts to schedule a service task based on the number of active containers on each node, 
can roughly assess the resources on the nodes. 
However this assessment fails to reflect various nodes in a heterogeneous cluster setting.
Considering the heterogeneity, the nodes in such a cluster have different configurations in terms of memory, CPU, and network.
Therefore, running the same amount of containers on these nodes results in different experiences.
Fig~\ref{fig:start-single} plots the average starting delay of and overall makespan of the set of Tomcat Docker containers. 
We conduct the experiments on two machines, M1 with 8GB memory, 4-core CPU and M2 has
16GB memory and 8-core CPU. On each particular machine, M1 or M2, we can see that the more containers it hosts, the larger the starting delay 
and makespan. However, M1 costs 23.67s on average to start 30 Tomcat containers and M2 costs 18.32s to start 40 containers. 
Additionally, when trying to initiate 80 Tomcat containers, M1 fails to complete the job and M2 finishes it.

\section{\sol~ System}
\label{sys}
%In this section, we, first, discuss the original modules in Docker manager and worker. Then, we illustrate the
%major components in DRAP.
\subsection{Framework of Manager and Worker Nodes}
As described in the previous section, there are multiple managers and workers in the system.
A manager has six hierarchical modules.
{\bf Client API} accepts the commands from clients and creates service objects.
{\bf Orchestrator} handles the lifecycle of service objects and manages mechanics for service discovery and load balancing.
{\bf Allocator} provides network model specific allocation functionality and allocates IP addresses to tasks.
{\bf Scheduler} assigns tasks to worker nodes.
{\bf Dispatcher} communicates with worker nodes, checks their states, and collects the {\bf heartbeats} from them.

A worker node, on the other hand, manages the Dockers containers and sends back their states to managers through periodical heartbeat messages.
An {\bf executor} is used to run the tasks that are assigned to the containers in this worker.

\subsection{\sol~ modules}
To simplify the implementation, we integrate the \sol~ components into the current framework. As shown on Fig~\ref{fig:drap},
it mainly consists of three parts: a container monitor that resides in the worker nodes, a worker monitor, and a \sol~ scheduler that
implement in manager nodes.

{\bf Container Monitor}: a container monitor collects the runtime resources usage statistics of Docker containers on worker nodes. 
At each application level, the monitored resources contain memory, CPU percentage, block I/O, and network I/O. 
The average usage report in a given time window of top users will be injected into the DRAP-Heartbeat messages and sent back to managers.
At the host system level, the tracking information includes I/O wait, reminder percentage of available memory, CPU, and bandwidth. 
The information is used by worker nodes to conduct a self-examination to identify its own bottleneck. 
If a bottleneck is found, a DRAP-Alert message will be produced and sent back to managers.

{\bf Work Monitor}: a worker monitor processes the messages from worker nodes. It maintains a table for each worker and the corresponding containers. 
Through analyzing the data, it will generate tasks, such as migrating a resource-intensive container to another host.

{\bf DRAP-Scheduler}: the DRAP-Scheduler assigns a task to a specific node based on the current available resources. For a duplicated Docker container,
DRAP-Scheduler checks its characteristics on resource consumption, 
such as memory intensity, through the records of the previous containers in the same services.

\begin{figure}[ht]
\centering
\includegraphics[width=0.8\linewidth]{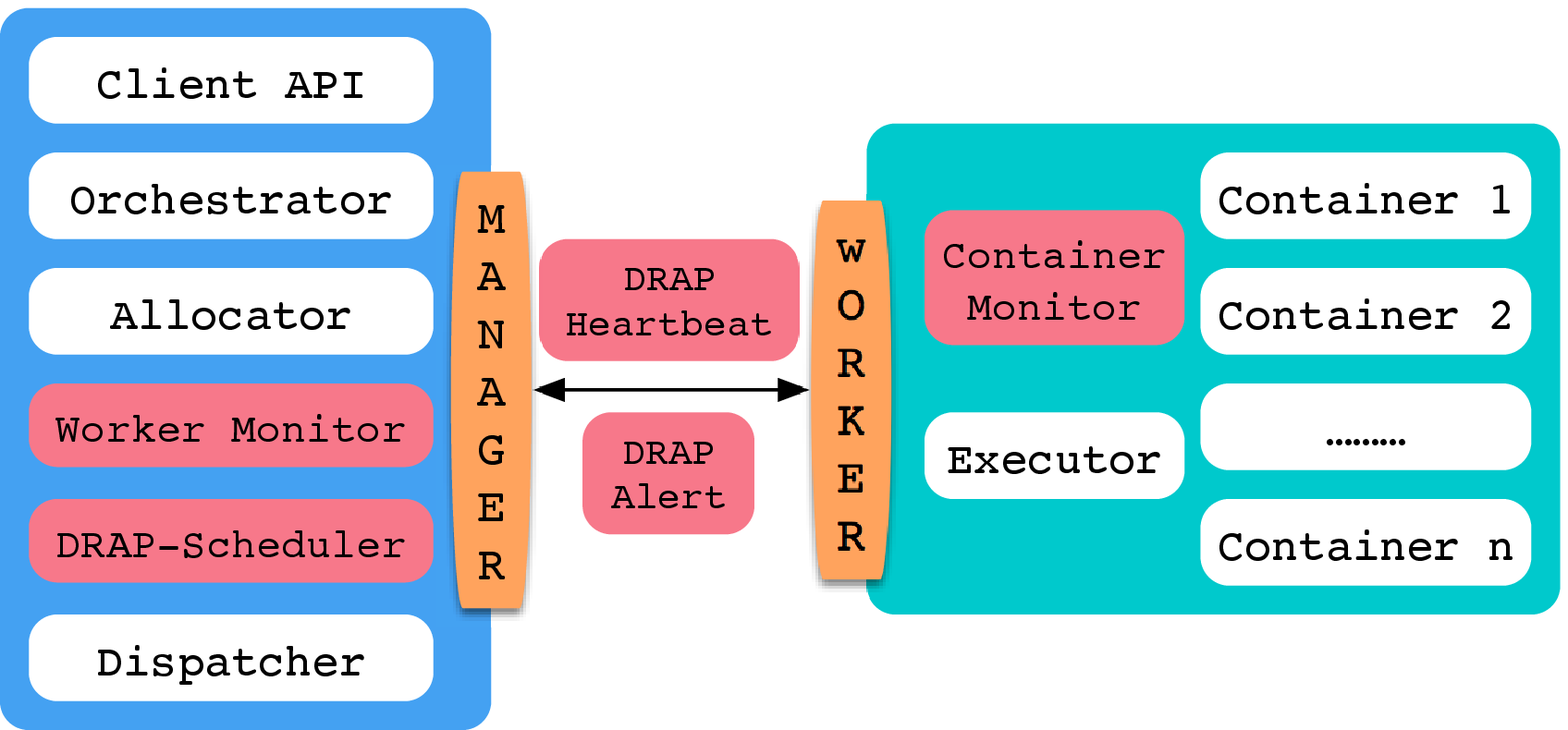}
\caption{Docker Framework with \sol~ Implemention}
\label{fig:drap} 
\end{figure}
\section{Problem Formulation}
The \sol~ scheduler aims to optimize the container placement such that the available
resources on each worker node are maximized. In this paper, we assume that a container requires multiple resources such as memory, CPU, bandwidth, and I/O for running its services. Since the services and their workloads in a container change over time, the resource requirements in a container also exhibit temporal
dynamics. Therefore, we formulate the resource requirements of a container as a function of time.
Denote $r_i^{k}(t)$ as the $k$th resource requirement of the $i$th container at time $t$. Let $x_{i,j}=\{0, 1\}$ be the container placement indicator.
If $x_{i,j}=1$, the $i$th container is placed in the $j$th work node. Denote $W_j^k$ as the total amount of the $k$th
resource in the $j$th work node. Let $\mathcal{C}$, $\mathcal{N}$, $\mathcal{K}$ be the set of containers, work nodes,
and the resources, respectively. The utilization ratio of the $k$ resource in the $j$th work node can be expressed as
\begin{equation}\label{eq:work_util_ratio}
  u^k_j(t)=\frac{\sum_{i\in\mathcal{C}}x_{i,j}r_i^{k}(t)}{W_j^k}
\end{equation}
We assume that the utilization ratio of the $j$th work node is defined by its highest utilized resource.
Then, the utilization ratio of the $j$th work node is $\max_{k\in\mathcal{K}}u^k_j(t)$.
The highest resource utilization among all the work nodes can be identified as
\begin{equation}\label{eq:util_worker}
  \nu=\max_{j\in\mathcal{N}}\max_{k\in\mathcal{K}}u^k_j(t).
\end{equation}
Since our objective when designing the \sol~ scheduler is to maximize the available
resources in each worker node, the \sol~ scheduling problem can be formulated as
\begin{align}\label{eq:sched_problem}
  \max_{x_{i,j}} &\;\;\;\;\; \nu \\
  s.t. &\;\;\;\;\; \sum_j x_{i,j}=1;\forall i\in\mathcal{C};\\
       &\;\;\;\;\; u^k_j(t)\leq 1, \forall k\in\mathcal{K}, \forall j\in\mathcal{N}.
\end{align}
The constraint in E.q. (4) requires that each container should be placed in
one worker node. The constrain in E.q. (5) enforces that the utilization ratio of any resource in a worker is less than one.
\begin{lemma}
The \sol~ scheduling problem is NP-hardness.
\end{lemma}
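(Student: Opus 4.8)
The plan is to establish NP-hardness by a polynomial-time many-one reduction from the \textsc{Partition} problem (equivalently, minimum-makespan scheduling on two identical machines) to the decision version of the \sol~ scheduling problem. First I would fix that decision version: given $\mathcal{C}$, $\mathcal{N}$, $\mathcal{K}$, the capacities $W_j^k$, the demand functions $r_i^k(\cdot)$, and a target $\nu^\star$, decide whether a placement $\{x_{i,j}\}$ exists that obeys the assignment constraint (each container on exactly one node) and the capacity constraint (every utilization ratio at most one) while keeping $\max_{j\in\mathcal{N}}\max_{k\in\mathcal{K}}u_j^k(t)\le\nu^\star$. Reading the stated design goal as load balancing, this is the natural decision companion of the optimization in the lemma; and since any algorithm that solves the optimization version (returning an optimal placement or reporting infeasibility) also answers this decision question, it suffices to prove the decision version NP-hard.

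Next I would restrict attention to the simplest nontrivial special case of the model so that the reduction target is transparent: a single time instant, so every $r_i^k(\cdot)$ is a constant and the temporal dynamics play no role; a single resource type, $|\mathcal{K}|=1$; and two identical worker nodes. In this restriction $u_j^1$ is just the aggregate demand on node $j$ divided by the common capacity, so ``keep the maximum utilization small'' is exactly ``minimize the makespan on two identical machines,'' a classical NP-hard problem.

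For the reduction, take a \textsc{Partition} instance, positive integers $a_1,\dots,a_n$ with $\sum_i a_i=2S$, and build $n$ containers with $r_i^1\equiv a_i$, two worker nodes with $W_1^1=W_2^1=S$, and target $\nu^\star=1$. Since the two node loads are nonnegative, sum to $2S$, and each is at most $S$ under the capacity constraint, a feasible placement exists if and only if each node receives containers of total demand exactly $S$, i.e.\ if and only if the \textsc{Partition} instance is a yes-instance; the converse direction is immediate. The construction is computable in time polynomial in the input size, so the decision version of the \sol~ scheduling problem is NP-hard, and hence so is the optimization problem stated in the lemma. (Note that here the capacity constraint already forces $u_j^k\le 1$, so the hard core is really the mere feasibility of a placement.)

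The part that needs the most care is not the combinatorics of the reduction, which is essentially textbook, but making the statement precise: fixing the interpretation of the objective (balancing versus the literal $\max$ written in the formulation), checking that constant demand functions, $|\mathcal{K}|=1$, and two identical nodes form a legitimate sub-family of the general instances, and observing that \textsc{Partition} (weakly NP-complete) already yields NP-hardness, with strong NP-hardness available from the analogous reduction from $3$-\textsc{Partition} onto three or more nodes. If one additionally wants the hardness to persist when $|\mathcal{K}|\ge 2$ is mandatory, a parallel reduction from multidimensional (vector) bin packing goes through with no extra effort.
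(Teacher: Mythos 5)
Your proposal is correct, and it takes a more elementary and more complete route than the paper's own argument, which is only a sketch: the paper restricts to constant-over-time demands (as you do) and then asserts that the simplified problem ``equals'' multidimensional bin packing, invoking a reduction from that problem whose details are explicitly omitted. You instead reduce from \textsc{Partition} with a single resource and two identical nodes, and you actually carry the reduction through: with $\nu^\star=1$ the capacity constraint forces each node's load to be exactly $S$, so feasibility of a placement is equivalent to a yes-instance of \textsc{Partition}. This buys two things the paper's sketch does not. First, a self-contained, checkable argument (weak NP-hardness suffices for the lemma as stated, and your remark that $3$-\textsc{Partition} gives strong NP-hardness, or that vector bin packing handles a mandatory $|\mathcal{K}|\ge 2$, recovers essentially the paper's intended reduction as a special case). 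Second, robustness to the oddity in the formulation: the objective is literally written as $\max_{x_{i,j}}\nu$ although the stated intent is to balance load, and by locating the hardness in the feasibility question itself your argument is indifferent to which reading one adopts, since any algorithm for either optimization must detect infeasibility. One further point in your favor is terminological care: bin packing minimizes the number of bins, whereas the \sol~ problem has a fixed node set, so the paper's claim that the simplified problem ``equals'' multidimensional bin packing is itself loose; your decision-version framing is the cleaner way to state the correspondence.
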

\begin{proof}
In proving the Lemma, we consider a simple case of the \sol~ scheduling problem in which the resource
requirements of each container are constant over time. The simplified \sol~ scheduling problem equals to
the multidimensional bin packing problem which is NP-hard~\cite{Bansal:2006:IAA,Meyerson:2013:OMLB,Im:2015:TBO}.
Hence, the lemma can be proved by reducing any instance of the multidimensional bin packing to the simplified \sol~
scheduling problem. For the sake of simplicity, we omit the detail proof in the paper.
\end{proof}

\begin{comment}
\begin{table}[ht]
\caption{Notation Table}
 \centering
 \small
 \begin{tabular}{c l}
  \hline
  \hline
  $M_{id}$ / $W_{id}$ &  Manager ID / Worker ID \\
  \hline
  $S_{id}$ / $C_{id}$ & Service ID / Container ID \\
  \hline
  $W_j^{k}$ & The total $k$th resource on the $j$th worker node\\
  \hline
  $r_i^{k}(t)$ &  the $k$th resource requirement of \\
  & the $i$th container at time $t$\\
  \hline
  $\{KS\}$ & A set stores known services IDs \\
  \hline
  $DOM(S_{id})$ & A function records dominate resource attribute \\

  \hline
  \hline
\label{table:locations}
\end{tabular}
\end{table}

\end{comment}

\begin{comment}
\begin{table}[ht]
\caption{Notation Table}
 \centering
 \small
 \begin{tabular}{c l}
  \hline
  \hline
  $M_{id}$ / $W_{id}$ &  Manager ID / Worker ID \\
  \hline
  $S_{id}$ / $C_{id}$ & Service ID / Container ID \\
  \hline
  $W_{mem}$ & Current available memory on a specific worker \\
  \hline
  $W_{cpu}$ & Current available CPU on a specific worker \\
  \hline
  $W_{net}$ & Current available bandwidth on a specific worker \\
  \hline
  $W_{io}$ & Current I/O wait time on a specific worker \\
  \hline
  $\{KS\}$ & A set stores known services IDs \\
  \hline
  $DOM(S_{id})$ & A function records dominate resource attribute \\

  \hline
  \hline
\label{table:locations}
\end{tabular}
\end{table}

\end{comment} 
\section{\sol~ in a Heterogeneous Cluster}
Previously, we discussed the different modules in \sol~ and their major responsibilities. We also formulated the \sol~ scheduling problem and proved that the problem is NP-hard.
In this section, we present the detailed design of \sol~with heuristic container placement and migration algorithms, in a heterogeneous cluster,
which aims to increase resource availability on each worker node and boost the service performance by  approximating the optimal solution of the \sol~ scheduling problem.
To achieve the objectives, \sol~ system consists of three strategies:
%1) Understand the dynamic resource demand of Docker container
1) Identify dominant resource demands of containers;
2) Initial container placement;
3) Migrate a container

\begin{figure*}[ht]
%\hspace*{0.8in}
   \centering
%\vspace{-0.1in}
      \begin{subfigure}[t]{0.24\linewidth}
\centering
      \includegraphics[width=\linewidth]{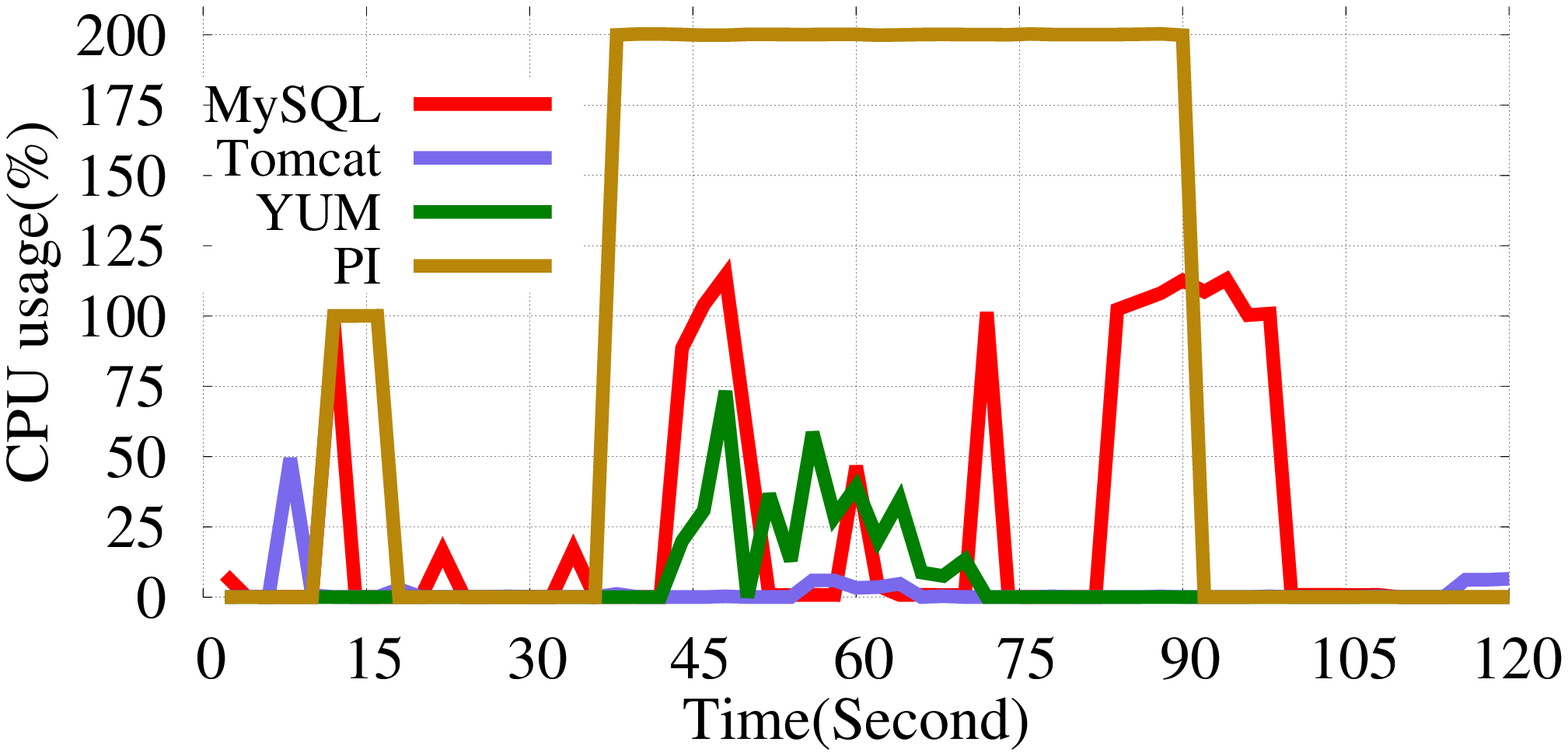}
      \vspace{-0.15in}
      \caption{Usage of CPU}
      \label{fig:cpu}
      \end{subfigure} %
      %%%%%%%%%%%%%%%%%%%%%%%%%%%%%%%%
      \begin{subfigure}[t]{0.24\linewidth}
\centering
      \includegraphics[width=\linewidth]{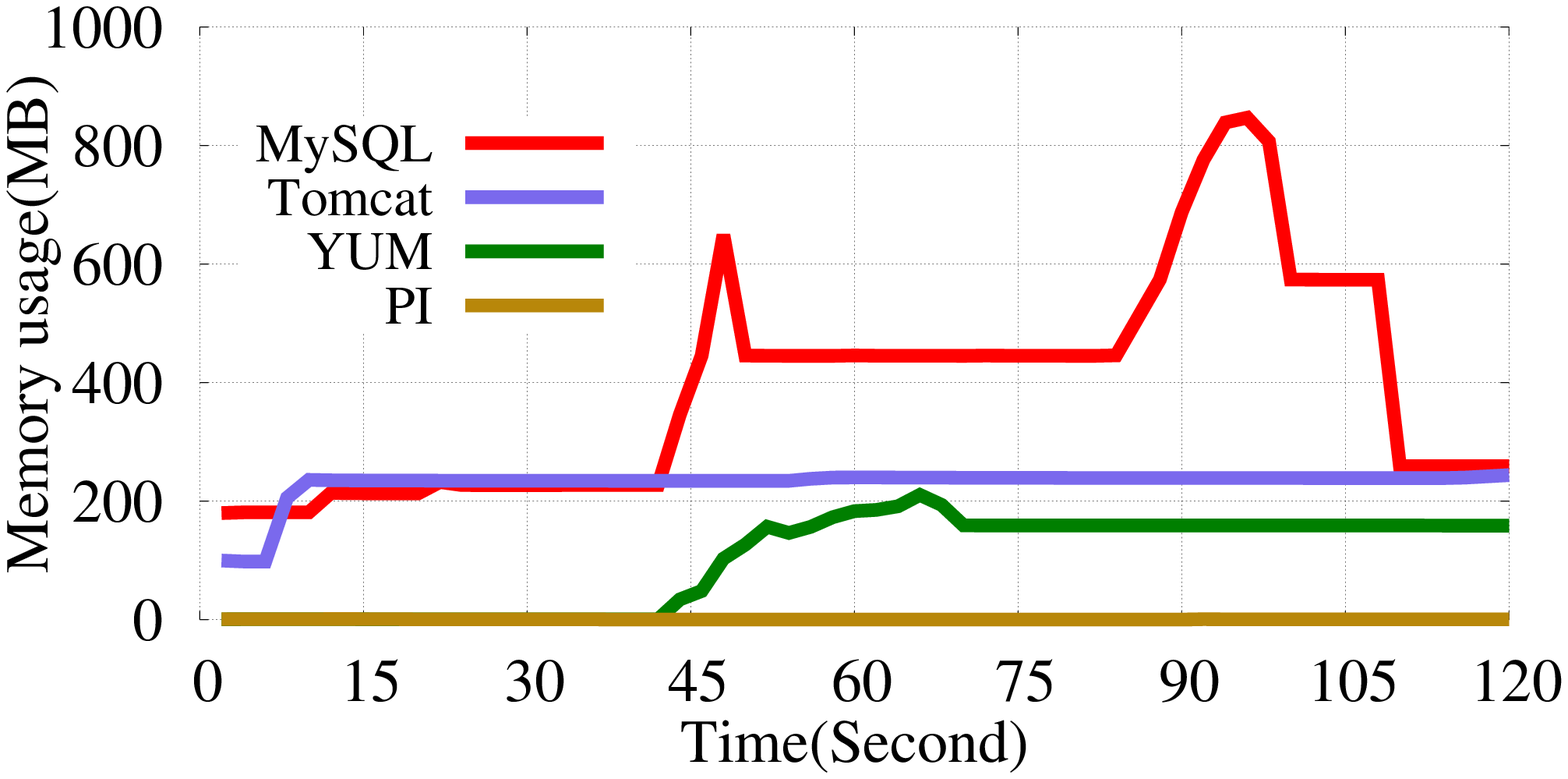}
      \vspace{-0.15in}
      \caption{Usage of Memory}
      \label{fig:mem}
      \end{subfigure} %
%%%%%%%%%%%%%%%%%%%%%%%%%%%%%%%%%%%%
      \begin{subfigure}[t]{0.24\linewidth}
\centering
      \includegraphics[width=\linewidth]{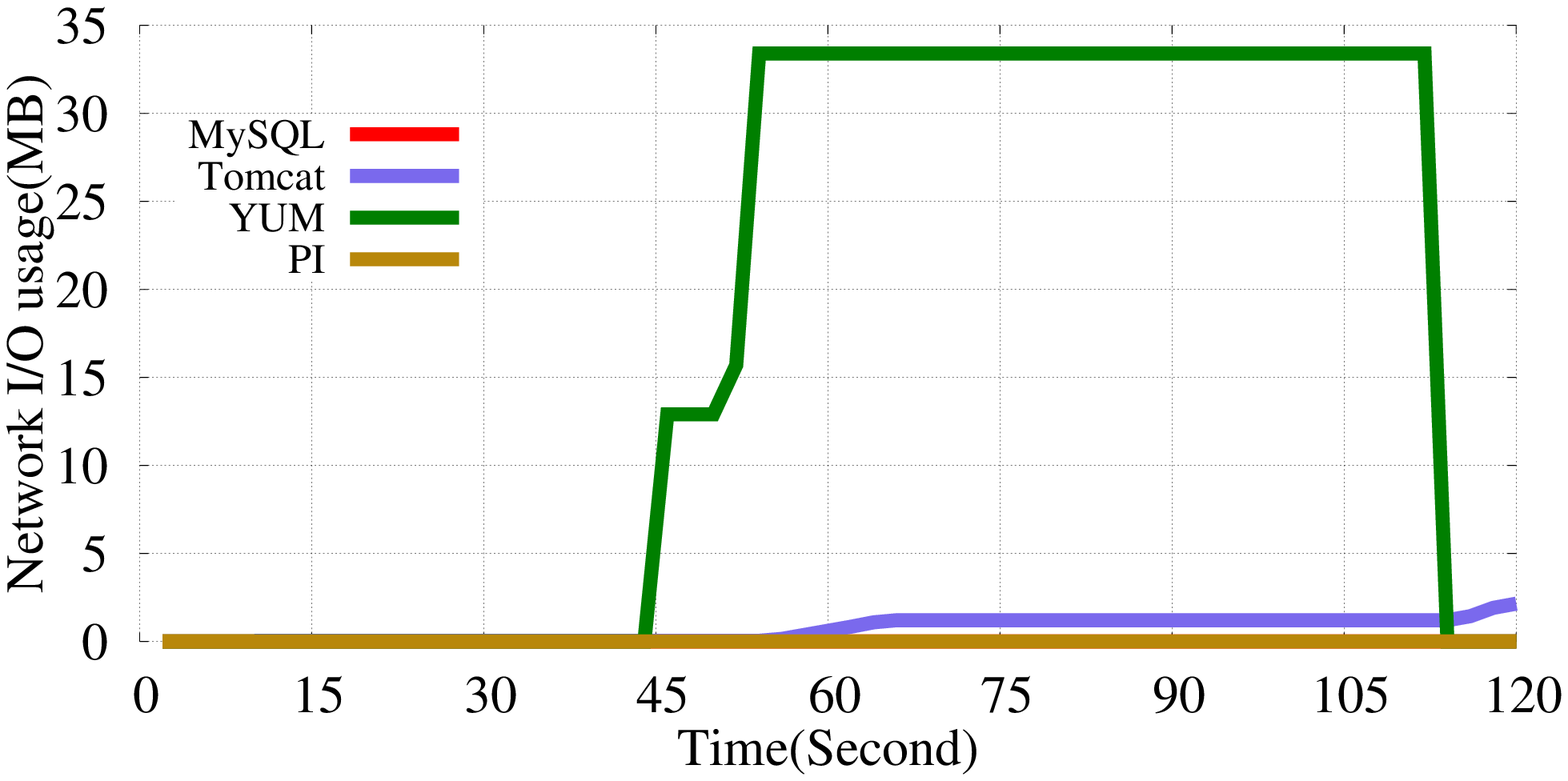}
      \vspace{-0.15in}
      \caption{Usage of Network I/O}
      \label{fig:netio}
      \end{subfigure} %
      %%%%%%%%%%%%%%%%%%%%%%%%%%%%%%%%
      \begin{subfigure}[t]{0.24\linewidth}
\centering
      \includegraphics[width=\linewidth]{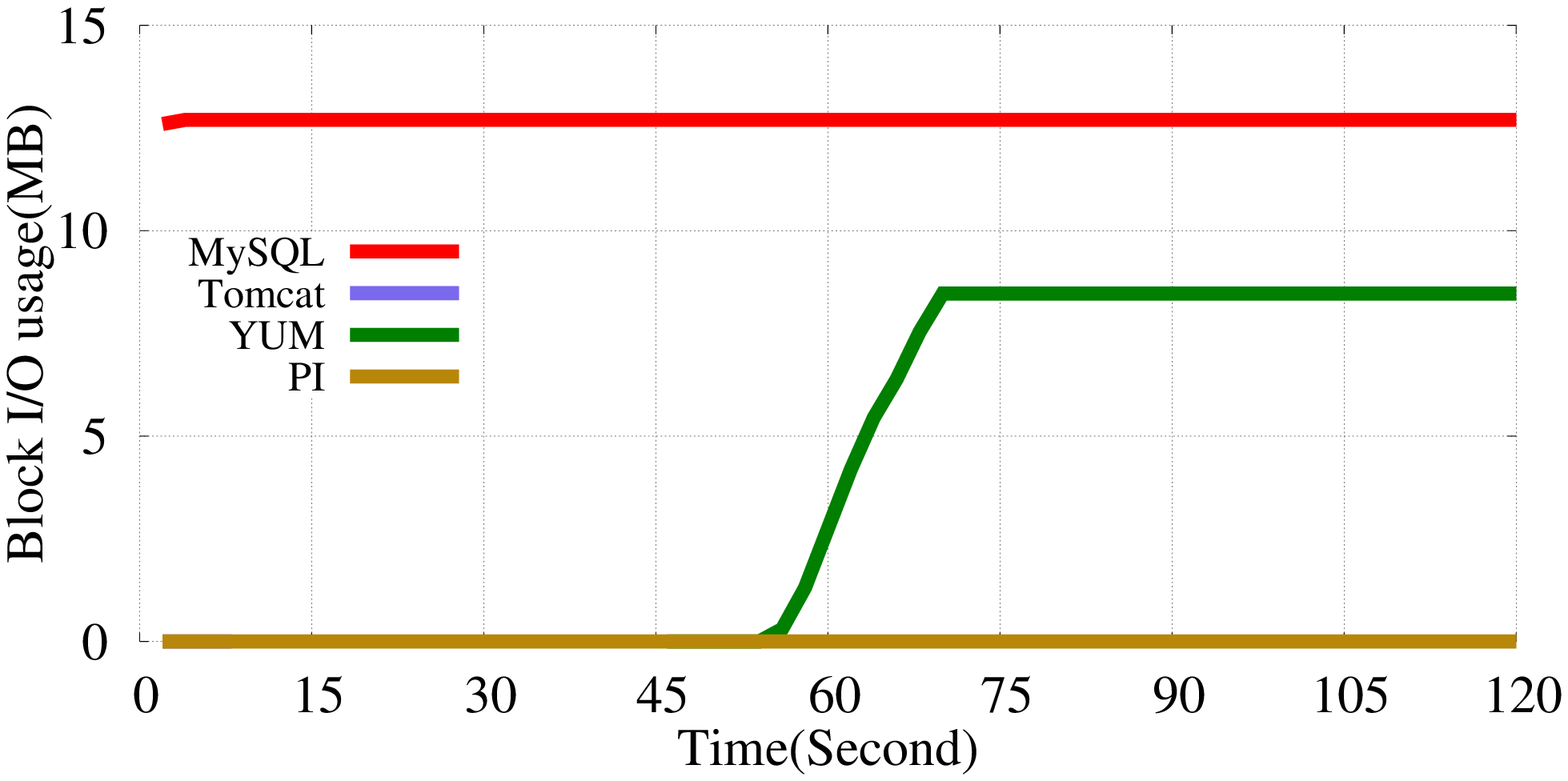}
      \vspace{-0.15in}
      \caption{Usage of Block I/O}
      \label{fig:blockio}
      \end{subfigure} %
\caption{Resource demonds under different workloads on four services, MySQL, Tomcat, YUM, PI.}
\label{fig:understand}
\end{figure*}

\subsection{Identify Resource Demands from Containers}
\label{understand}
Before improving the overall resource efficiency, the system needs to understand the dynamic resource demands of various containers.
A container is, usually, focused on providing a specific service, such as web browsing, data sorting, and database querying.
Different algorithms and operations will be applied to the services, which result in diverse resource demands.
As an intuitive example, we conducted the experiments on NSF Cloudlab~\cite{cloudlab} (M400 node hosted by University of Utah).
The containers are initiated by using the following four images and the data is collected through ``docker stats'' command.
\begin{enumerate}
 \item MySQL: the relational database management system. Tested workloads: scan, select, count, join.
 \item Tomcat: provides HTTP web services with Java. Tested workloads: HTTP queries at 10/second and 20/second of a HelloWorld webpage.
 \item YUM: a software package manager that installs, updates, and removes packages. Tested workload: download and install ``vim'' package.
 \item PI: a service to calculate PI. Tested workload: top 3,000 digits with single thread, top 7,500 digits with two threads.
\end{enumerate}
\cref{fig:cpu,fig:mem,fig:netio,fig:blockio} plots the dynamic resource demands under different workloads on the above four Docker containers.
The figures illustrate very diverse usage patterns on four types of resources: CPU, memory, network I/O, and block I/O.
For example, without workload, container PI consumes very limited resources. However, when the jobs arrive at 10th and 38th second, the CPU usage
jumps to 100\% for a single thread job and 200\% for a two-threads job. The usages of the other three types of resources still remain at very low levels.
For MySQL service container, with tested operations, the CPU usage shows a burst when clients submit a request. At time 84, a ``join'' operation that involves
3 tables is submitted, and we can find CPU usage jumps, as well as memory usage. This is because the join operation needs a lot of computation and copies
of tables in memory. Different usage trends are found on YUM and Tomcat services, where YUM uses less CPU and memory, but more network I/O and block I/O to
download and install packages. On the other hand,
Tomcat consumes a very small amount of network I/O and block I/O due to the size of a tested HelloWorld page, but more than 200MB of memory is used to maintain the
service.
To balance the resource usage, it's crucial to place the containers with complementary demands on the same worker.
As shown on the graphs,
there is a dominant resource demand of a service in a given period despite multiple types of resources.

In \sol~, we need to identify the dominant resource demand for each service.
A manager, in the system, can monitor all of the containers' resource usage and group them by their associated service ID.
Suppose the service $s_i \in S$ contains
$m$ running containers that store in a set, $RC_{s_i}$. The resources consumed by $c_i \in RC_{s_i}$ is denoted by a vector,
$R_{c_i}$, where each attribute, $r_i$, in the vector represents a type of resources, such as memory and CPU.
If there are $q$ types of resources in the system, the average resource cost of $s_i$ is a vector, $R_{s_i}$,
$\begin{aligned}
 R_{s_i} & = \sum\nolimits_{c_i \in RC_{s_i}} R_{c_i} \\
         & = <\sum_{c_i \in RC_{s_i}} r_1 / m, \sum_{c_i \in RC_{s_i}} r_2 / m, ..., \sum_{c_i \in RC_{s_i}} r_q / m>
 \end{aligned}
$

On the worker nodes, there is a limited amount of resources in each type.
The resource limit is a vector that contains $q$ attributes, $<l_1, l_2, ..., l_q>$.
The limit of a system, $<L_1, L_2, ..., L_q>$, is obtained by from the sum of vectors from workers.
Therefore,  $R_{s_i}$ can be represented by a percentage of the total resources in the system, for the
$i^{th}$ type, the container cost for $s_i$ in on average is $\sum_{c_i \in RC_{s_i}} r_i / m \div L_i$.
With the analysis, we define the dominant function,
$
  DOM(s_i) = max \{ \sum_{c_i \in RC_{s_i}} r_i / m \div L_i \}
$
Function $DOM(s_i)$ returns the type of a dominant resource demand of service $s_i$ within a given time period.
The value of $DOM(s_i)$ changes along with the
system depending on the running containers for $s_i$ and the current cost of them.

\subsection{Initial Container Placement}
To use a SwarmKit cluster, clients need to execute the command ``docker run'' to start a new container.
Therefore, the first task for the cluster is to choose a worker node to host the container.
As discussed in section~\ref{back}, the default container placement strategy fails to take dynamic
resource contention into consideration. This is because the managers in SwarmKit
do not have a mechanism that can monitor the current available resource.
\sol, on the other hand, addresses the problem by introducing \sol-Heartbeat.
\sol-Heartbeat is an enhanced heartbeat message that not only the states of worker node, but also the
containers' resource usage over a given time window, the usage includes memory, CPU, bandwidth, and block I/O.
On the manager side, the data will be organized into a table that keeps tracking the current available resource
on each worker and its corresponding containers' resource usages.

Running on managers, Algorithm~\ref{alg:initial} assigns a container initialization task to a specific worker.
Firstly, each manager maintains a known service set that records dockers' characteristics, such as the usage of memory, CPU, bandwidth, and block i/o (line 1).
The initial candidate worker are all running workers (line 2).
When a new container starting task arrives, the algorithm applies all filters that the user specified to shrink the candidate work set, $W_{cand}$ (line 3-6).
Then, it checks whether the container belongs to a known service (line 7). If it is, the $S_{dom}$ parameter will be used to store the container's dominant
resource attribute (line 8). In \sol, we consider four types, memory, CPU, bandwidth, and block i/o.
The $W_{cand}$ set will be sorted according to the dominant resource attribute and
return the $W_{id}$ with the highest available resource in $S_{dom}$ type (line 9-10).
If the service cannot be found in $\{KS\}$, $W_{id}$ with the highest available resource on average will be chosen (line 11-13).

\begin{algorithm}[ht]
\begin{algorithmic}[1]
\STATE Maintains a known characteristics service set $\{KS\}$
%and a table that records the current available resource $\{AR\}$;
\STATE $\{W_{cand}\}$ = All running $W_{id}$;
\STATE {\bf Function ContainerPlacement($S_{ID}$)}

\FOR {$w_{id} \in \{W_{cand}\}$}
\IF {$!Filters(w_{id})$}
\STATE Remove $w_{id}$ from $\{W_{cand}\}$
\ENDIF
\ENDFOR

\IF {$S_{ID} \in \{KS\}$}
\STATE $S_{DOM} =  DOM(S_{ID})$
\STATE Sort $W_{cand}$ according to $r_{S_{DOM}}$
\STATE Return $w_{id}$ with highest $r_{S_{DOM}}$

\begin{comment}
  \IF {$S_{dom} = ``M''$}
  \STATE Sort $\{W_{cand}\}$ according to $W_{mem}$
  \STATE Return $W_{id}$ with highest $W_{mem}$

  \ELSIF {$S_{DOM} = ``C''$}
  \STATE Sort $\{W_{cand}\}$ according to $W_{cpu}$
  \STATE Return $W_{id}$ with highest $W_{cpu}$

  \ELSIF {$S_{dom} = ``N''$}
  \STATE Sort $\{W_{cand}\}$ according to $W_{net}$
  \STATE Return $W_{id}$ with highest $W_{net}$

  \ELSIF {$S_{dom} = ``IO''$}
  \STATE Sort $\{W_{cand}\}$ according to $W_{io}$
  \STATE Return $W_{id}$ with highest $W_{io}$
  \ENDIF
\end{comment}

  \ELSE
  \STATE Sort $\sum_{i=0}^{i=q} r_i / m$ for $w_{id} \in W_{cand}$
  \STATE Return $w_{id}$ with highest average available resource

\ENDIF

\end{algorithmic}
\caption{Container Placement on Managers}
\label{alg:initial}
\end{algorithm}

\subsection{Migrating a Container}
In a Swarmkit cluster, resource contention happens on every worker.
The container conitor, which is a module of \sol, runs on each worker to record resource usages of hosting containers.
In addition, the worker keeps tracking available resources on itself. Whenever it finds a draining type of
resources becomes a bottleneck, it sends to managers a \sol~ alert message that contains the bottleneck type and the
most costly container of this type. Upon receiving the \sol~ alert message, the manager needs
to migrate this container to an appropriate worker and kill it on the worker to release the resources.

Algorithm~\ref{alg:migrate} presents the procedure to process an alert message from $w_i$.
It first builds a candidate set $W_{cand}$, which includes all running workers expect $w_i$ that sends the alert (line 1).
Then, the manager extracts the resource type, $r_i$ that causes the bottleneck and finds the corresponding $S_{id}$ for the $C_{id}$ (lines 2-4).
With $W_{cand}$ and $S_{id}$, the algorithm can decide whether this $S_{id}$ is a global service (line 5).
If $S_{id}$ is a global service and it is in the known service set, $\{KS\}$, the algorithm returns $w_{id}$ that is included in $W_{cand}$,
with the highest available $r_{S_{DOM}}$.
On the other hand, it returns $w_{id}$ with the highest available $r_i$ if $S_{id}$ is not in $\{KS\}$ and $S_{DOM}$ on unknown (lines 6-12).
When $S_{id}$ is not a global service, we want to increase the reliability of $S_{id}$ by placing its containers to different workers as much as possible.
In this situation, we have a similar process expect a different $W_{cand}$, where $W_{cand}$  contains all running
workers that do not hosting any containers for $S_{id}$ (lines 13 - 23).

\begin{algorithm}[ht]
\begin{algorithmic}[1]

\STATE $\{W_{cand}\}$ = All running workers expect $w_i$;
\STATE {\bf Function ReceiveAlertMsg($C_{id}$)}
\STATE Extract the bottleneck type $r_i$
\STATE Find corresponding $S_{id}$ for $C_{id}$

\IF {$\forall w_{id} \in W_{cand} \rightarrow S_{id} \in w_{id}$}

  \IF {$S_{id} \in \{KS\}$}
  \STATE $S_{DOM} = DOM(S_{id})$

%  \IF {$S_{DOM} \neq r_i$}
%  \STATE Return Restart the container on $w_i$
%  \ENDIF

\STATE Sort $W_{cand}$ according to $r_{S_{DOM}}$
\STATE Return $w_{id}$ with highest $r_{S_{DOM}}$

\ELSE
%\STATE $\forall w_{id} \in W_{cand}$, Remove $w_{id}$ from $W_{cand}$
\STATE Sort $W_{cand}$ according to $r_i$
\STATE Return $w_{id}$ with highest $r_i$

\ENDIF

  \ELSE
  \FOR {$ w_{id} \in W_{cand}$}
  \IF {$S_{id} \in w_{id}$}
  \STATE Remove $w_{id}$ from $W_{cand}$
  \ENDIF
  \ENDFOR
%  \STATE Sort $W_{cand}$ according to $r_i$
%  \STATE Return $W_{id}$ with highest available resource type $r_i$

\IF {$S_{id} \in \{KS\}$}
  \STATE $S_{DOM} = DOM(S_{id})$

%  \IF {$S_{DOM} \neq r_i$}
%  \STATE Return Restart the container on $w_i$
%  \ENDIF

\STATE Sort $W_{cand}$ according to $r_{S_{DOM}}$
\STATE Return $w_{id}$ with highest $r_{S_{DOM}}$

\ELSE
%\STATE $\forall w_{id} \in W_{cand}$, Remove $w_{id}$ from $W_{cand}$
\STATE Sort $W_{cand}$ according to $r_i$
\STATE Return $w_{id}$ with highest $r_i$

\ENDIF

\ENDIF

\end{algorithmic}
\caption{Process \sol~ Alert Message from $w_i$}
\label{alg:migrate}
\end{algorithm}

\section{Performance Evaluation}
%In this section, we evaluate the performance of \sol~and
%compare it with Swarmkit management tool from Docker.

\begin{figure*}[ht]
%\hspace*{0.8in}
   \centering
%\vspace{-0.1in}
      \begin{subfigure}[t]{0.32\linewidth}
\centering
      \includegraphics[width=\linewidth]{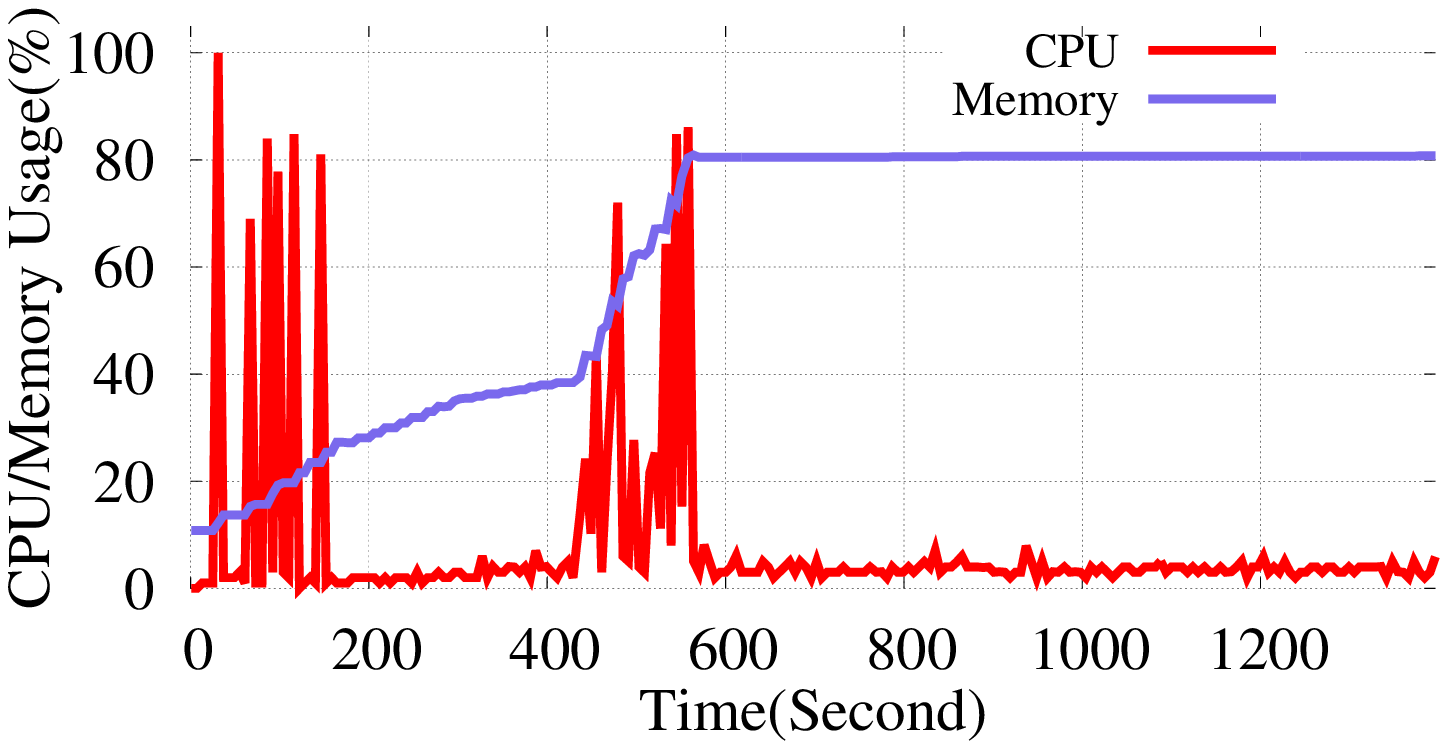}
      \vspace{-0.15in}
      \caption{Worker 1 (Spread, 100)}
      \label{fig:25worker1}
      \end{subfigure} %
      ~%%%%%%%%%%%%%%%%%%%%%%%%%%%%%%%%
      \begin{subfigure}[t]{0.32\linewidth}
\centering
      \includegraphics[width=\linewidth]{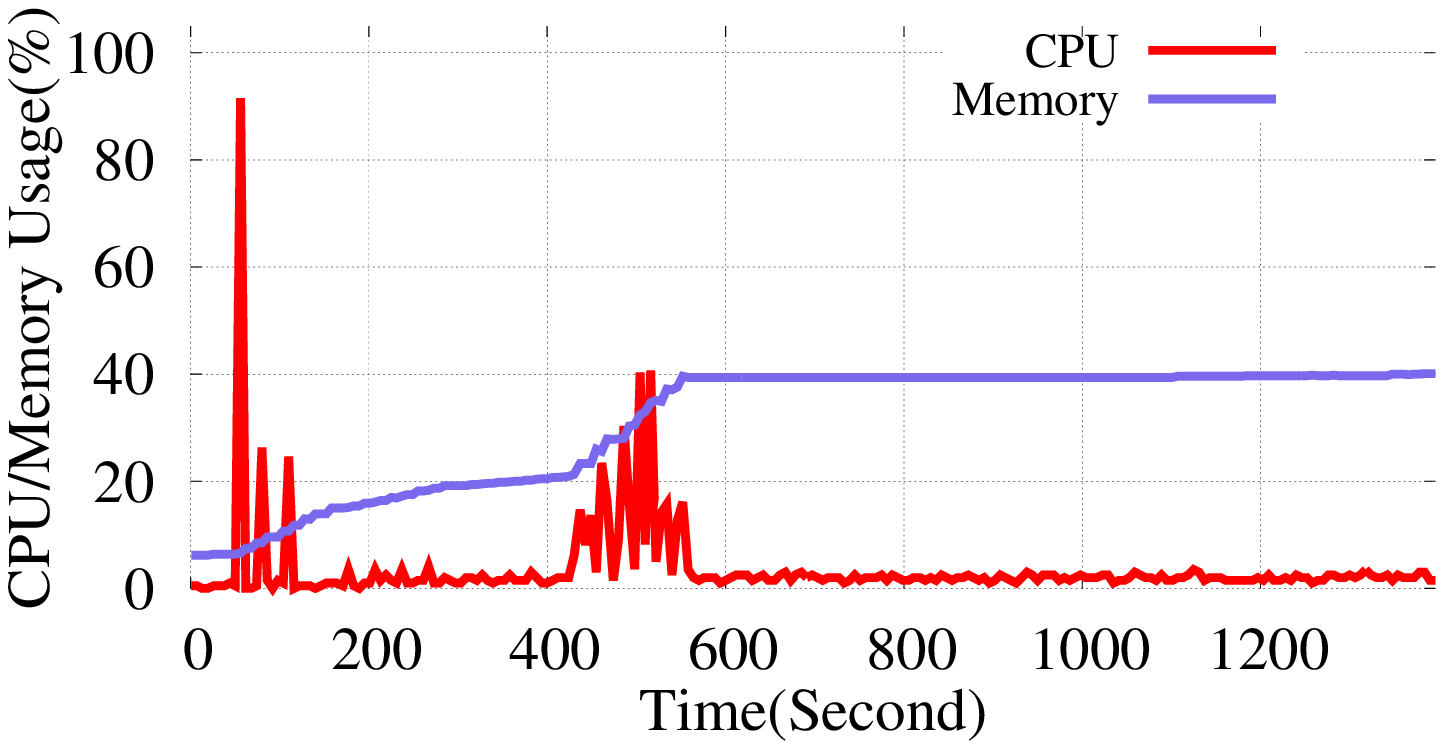}
      \vspace{-0.15in}
      \caption{Worker 2 (Spread, 100)}
      \label{fig:25worker2}
      \end{subfigure} %
      ~
      \begin{subfigure}[t]{0.32\linewidth}
\centering
      \includegraphics[width=\linewidth]{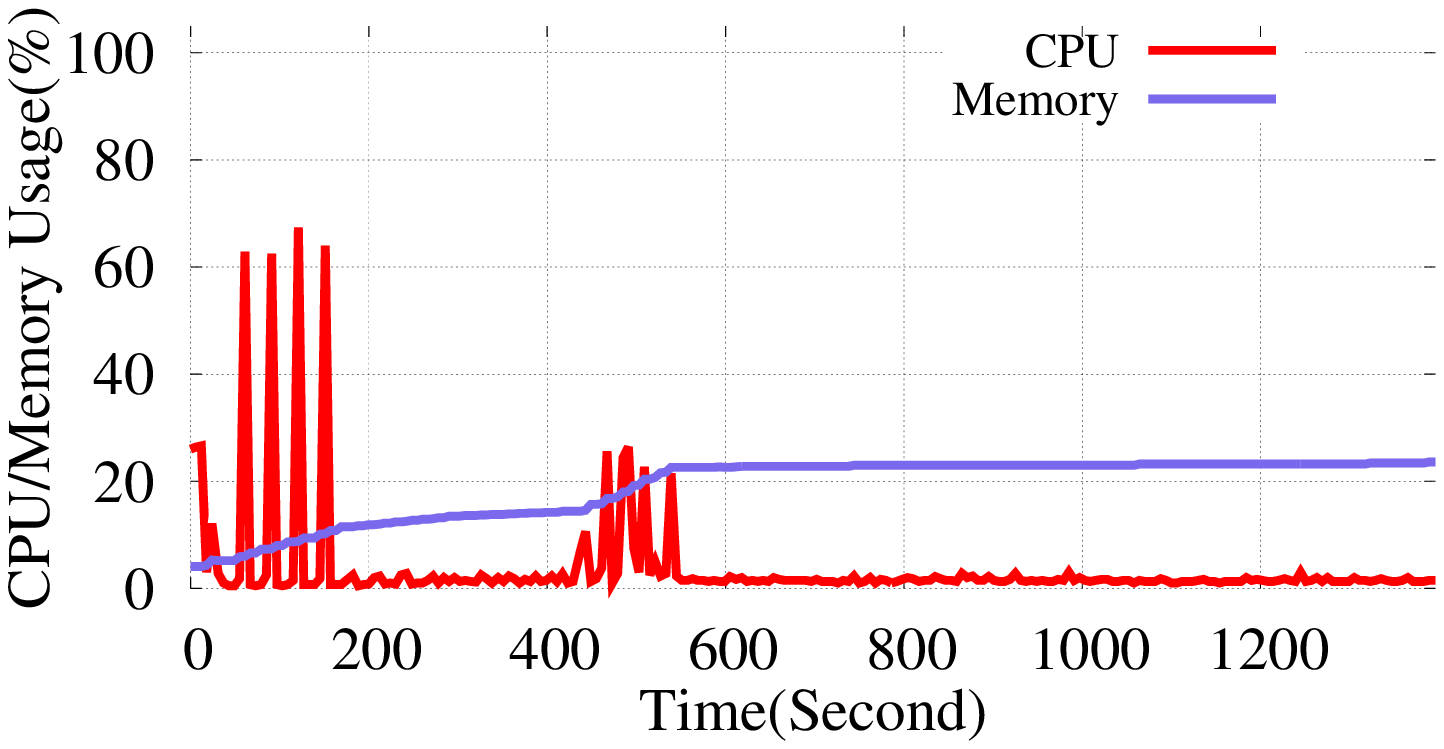}
      \vspace{-0.15in}
      \caption{Worker 3 (Spread, 100)}
      \label{fig:25worker3}
      \end{subfigure} %
      
   \centering
%\vspace{-0.1in}
      \begin{subfigure}[t]{0.32\linewidth}
\centering
      \includegraphics[width=\linewidth]{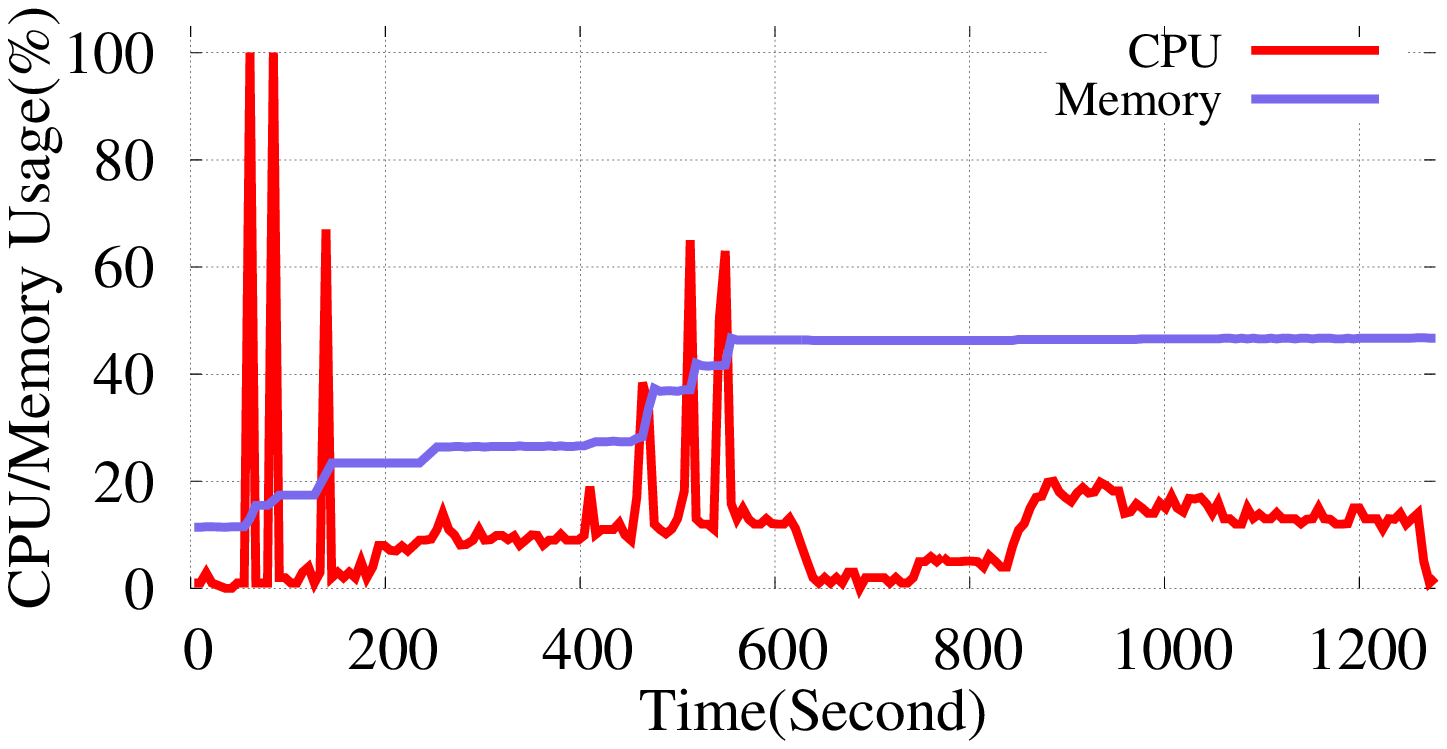}
      \vspace{-0.15in}
      \caption{Worker 1 (\sol, 100)}
      \label{fig:25worker1g}
      \end{subfigure} %
      ~%%%%%%%%%%%%%%%%%%%%%%%%%%%%%%%%
      \begin{subfigure}[t]{0.32\linewidth}
\centering
      \includegraphics[width=\linewidth]{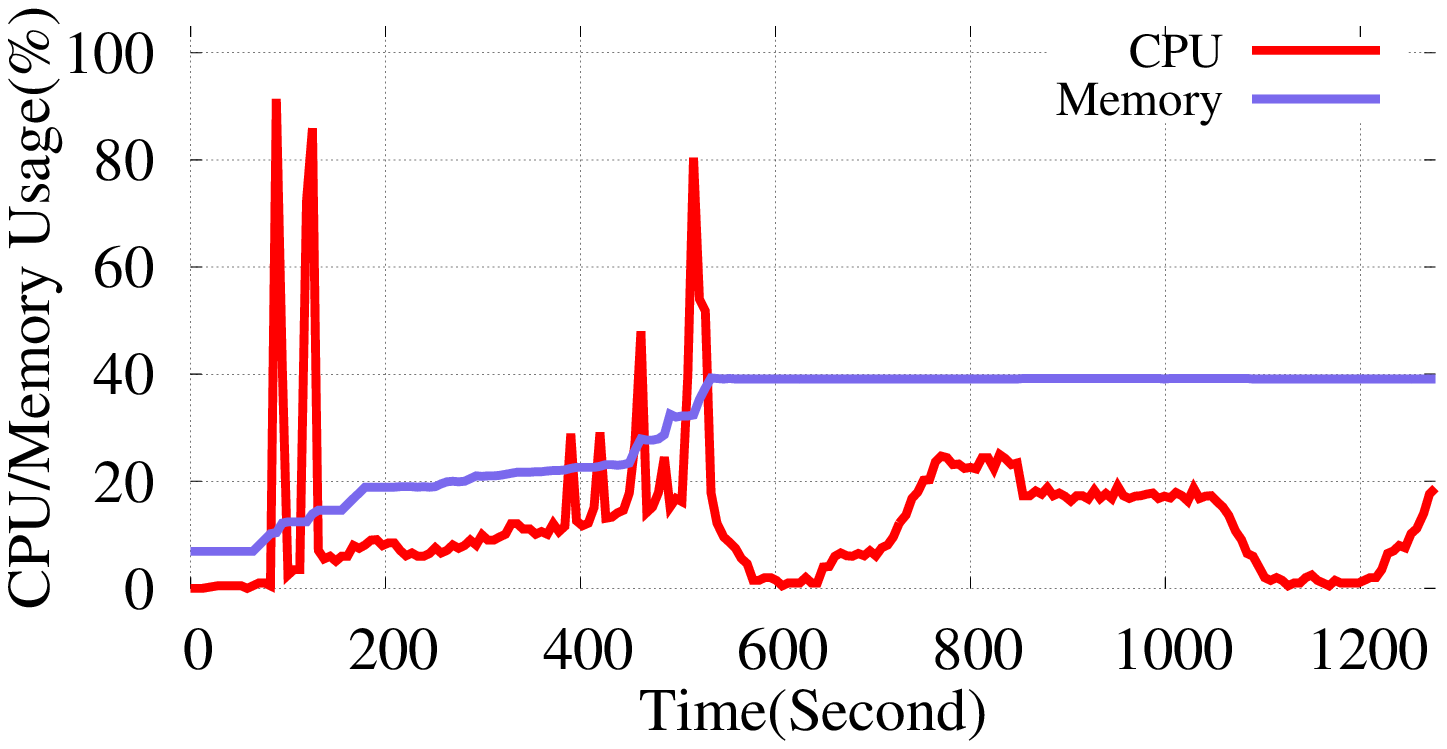}
      \vspace{-0.15in}
      \caption{Worker 2 (\sol, 100)}
      \label{fig:25worker2g}
      \end{subfigure} %
      ~
      \begin{subfigure}[t]{0.32\linewidth}
\centering
      \includegraphics[width=\linewidth]{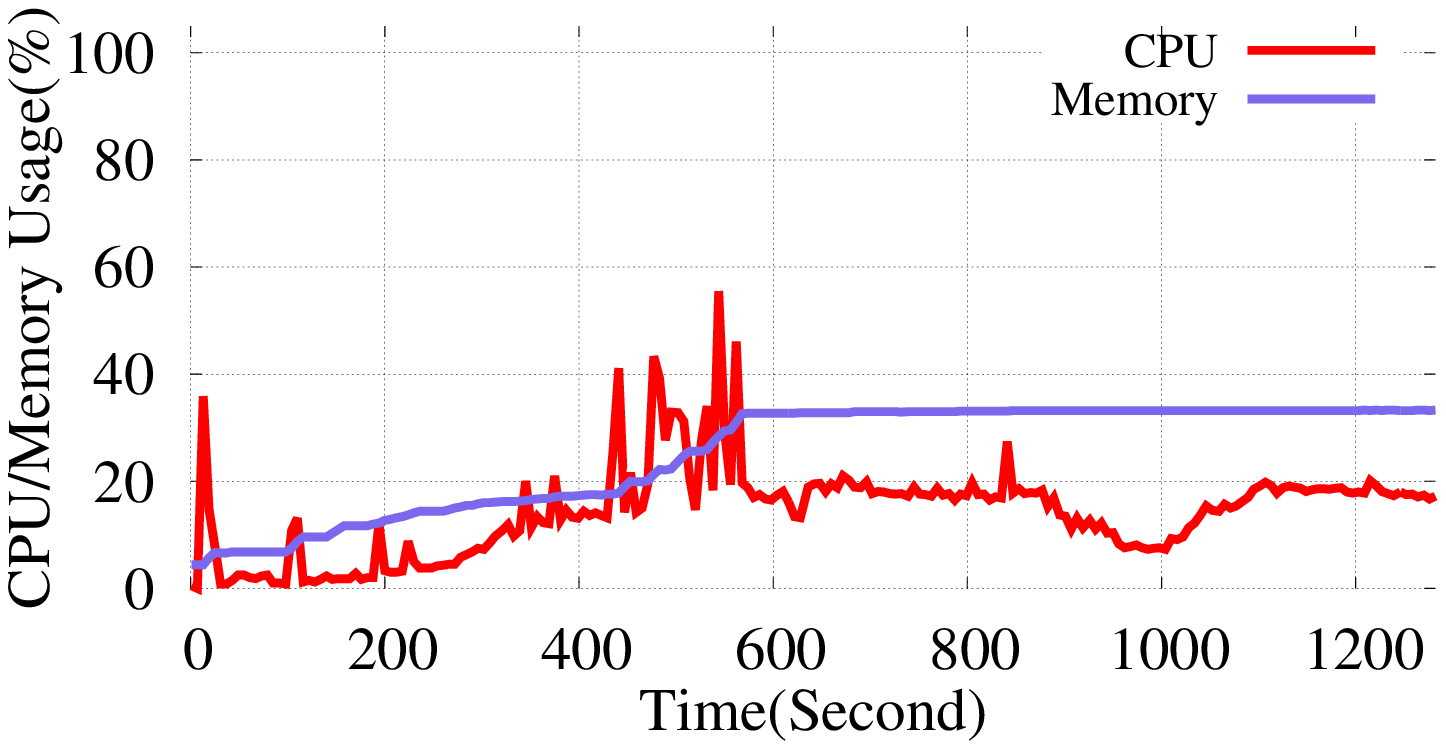}
      \vspace{-0.15in}
      \caption{Worker 3 (\sol, 100)}
      \label{fig:25worker3g}
      \end{subfigure} %
      
\caption{Memory and CPU resources usage comparison between Spread and \sol~placement scheme (100 containers)}
\label{fig:noworkload25}
\end{figure*}

\subsection{Implementation, Testbed and Workloads}
We implement our new container placement 
scheme, \sol, on Docker Community Edition (CE) v17. 
As described in Section~\ref{sys}, the major modules in \sol~are integrated into the existing
Docker Swarmkit framework. 

To evaluate \sol, we build a heterogeneous cluster on Alibaba Cloud~\cite{aliyun}, which supports multiple types of computing nodes.
Specifically, we use three different types of instances, small (1 CPU core and 4G memory), medium (4 CPU cores and 8G memory)
and large (8 CPU cores and 16G memory).
In the small-scale testing, we setup a cluster with 3 nodes, one of each type, and configure it with 1 manager and 3 worker (1 of the 3 physical nodes 
hosts both manager and worker). In experiments on scalability, we configure the cluster with 1 manger and 9 workers that consist 3 instances of each type. 

%We consider two different categories, workloads for the cluster and for services provided by containers.
The main objective of \sol~is to understand resource demands of services and place them on appropriate worker nodes.
As we discussed in Section~\ref{understand}, characteristics of services are varied. 
Therefore, workloads for the cluster are images of various services. In the evaluation,
we select 18 different images in 4 types from Docker Hub~\cite{dockerhub} to build our image pool. 
{\bf Database Services}:  MongoDB, MySQL, Postgres, Cassandra, RethinkDB.
{\bf Storage/Caching Services}:  Registry, Memcached.
{\bf Web Services}: Tomcat, Httpd, Redis, HAProxy, Jetty, Nginx, GlassFish.
{\bf Message Services}: RabbitMQ, Apache ZooKeeper, ActiveMQ, Ghost.

\subsection{Evaluation Results}
\subsubsection{Idle containers}
In this subsection, we present the result of a cluster with idle containers.
If a container is in a running state but does not serve any clients, we call it a idle container.
Idle container is an important concept since every node, right after initialization will act as an idle container.
Understanding the resource demands of an idle container will help us select its host.
In these experiments, we first randomly choose 14 images form the pool, and each image will be used to initiate 10
containers. Therefore, there are 140 containers in the cluster. 
Those containers are started one by one with 5 seconds interval. 
This is because previous containers will result in  different available resources on worker nodes, which we can utilize 
to test \sol.

Fig~\ref{fig:noworkload25} illustrates a comparison of memory and CPU usages between Spread, 
a Swarmkit default placement scheme, with \sol. As we can see from the subfigures, most of the CPU usage happens from 0 to 500s.
This is caused by submission pattern that used to initiate containers. The percentage grows continuously from 0 to 500s since
we have 100 containers and the submission interval is 5 seconds. While in both systems, the usage of CPU stays at a low level on average. 
However, the memory usage keeps increasing along with the number of containers on each worker. 
Due to the idle container setting, the utilization of memory is stable after 500s (all the containers 
have successfully initiated). There are some jitters on the curve of CPU, because 
some supporting programs, such as the Docker engine and service daemon, are running on the same worker and, of course, they are consuming resources. 
Comparing the memory usage rates after 500s, \sol~significantly reduces rate on worker 1, from 80.5\% to 46.7\%. 
On worker 2, Spread and \sol~achieve similar performance on memory, 39.1\% verse 40.6\%.
On worker 3, Spread results in 23.6\% and \sol~consumes 33.3\%. The \sol~outperforms Spread by considering the heterogeneity in the cluster and 
various resource demands of services. When a task arrives at the system, it selects a worker based on the service demands and current available resources.
Fig~\ref{fig:num} shows the number of containers on workers. For Swarmkit with Spread, it uses a bin-pack strategy and tries to equally distribute
the containers to every worker, which results in 34, 33, 33 containers for worker 1, 2, 3. While in \sol, worker 3 has more power than others and hosts more 
containers than worker 1, which has limited resource.
\begin{comment}
\begin{figure}[h]
\centering
\begin{minipage}[t]{width=0.24\linewidth}
\centering
\includegraphics[width=\linewidth]{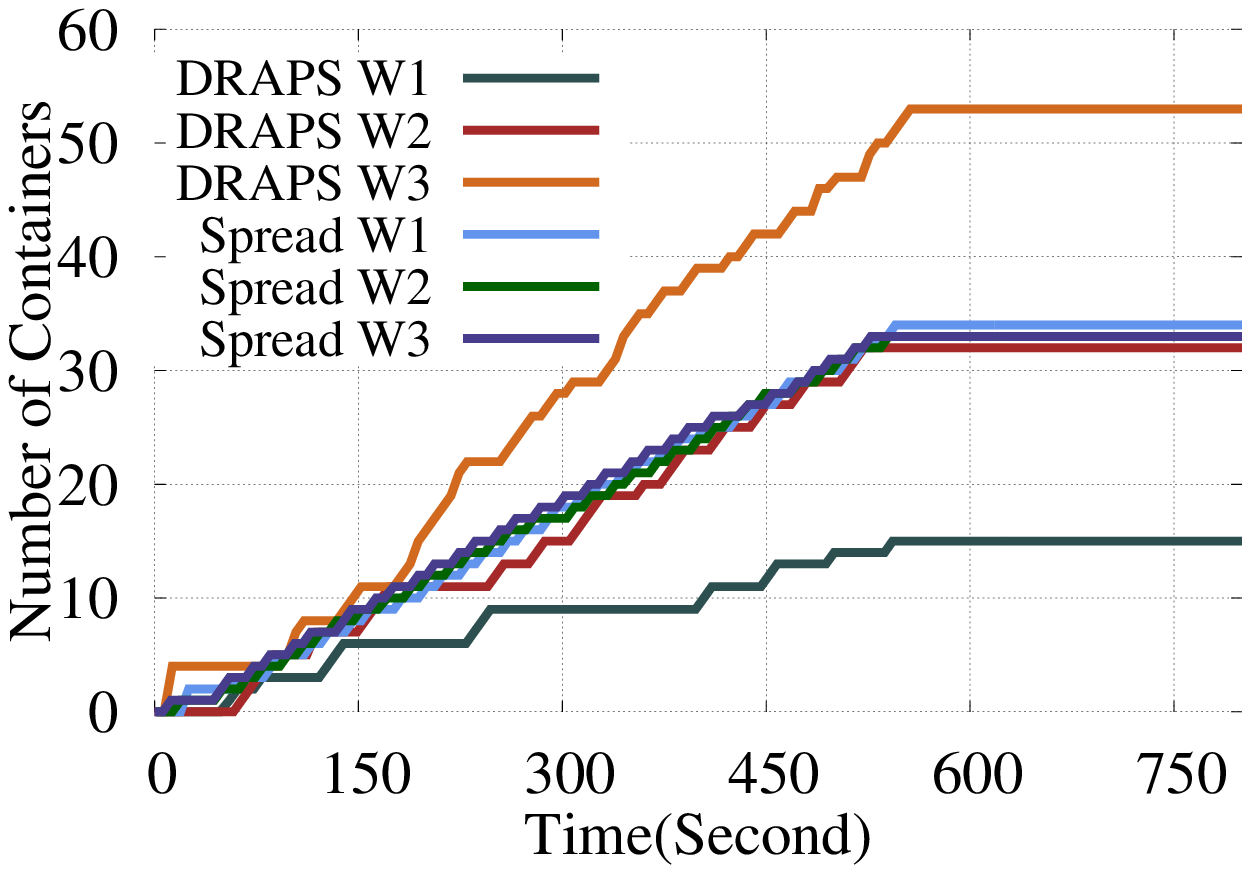}
\caption{Number of containers on each worker}
\label{fig:num} 
\end{minipage}
~
\begin{minipage}[t]{width=0.24\linewidth}
\centering
\includegraphics[width=\linewidth]{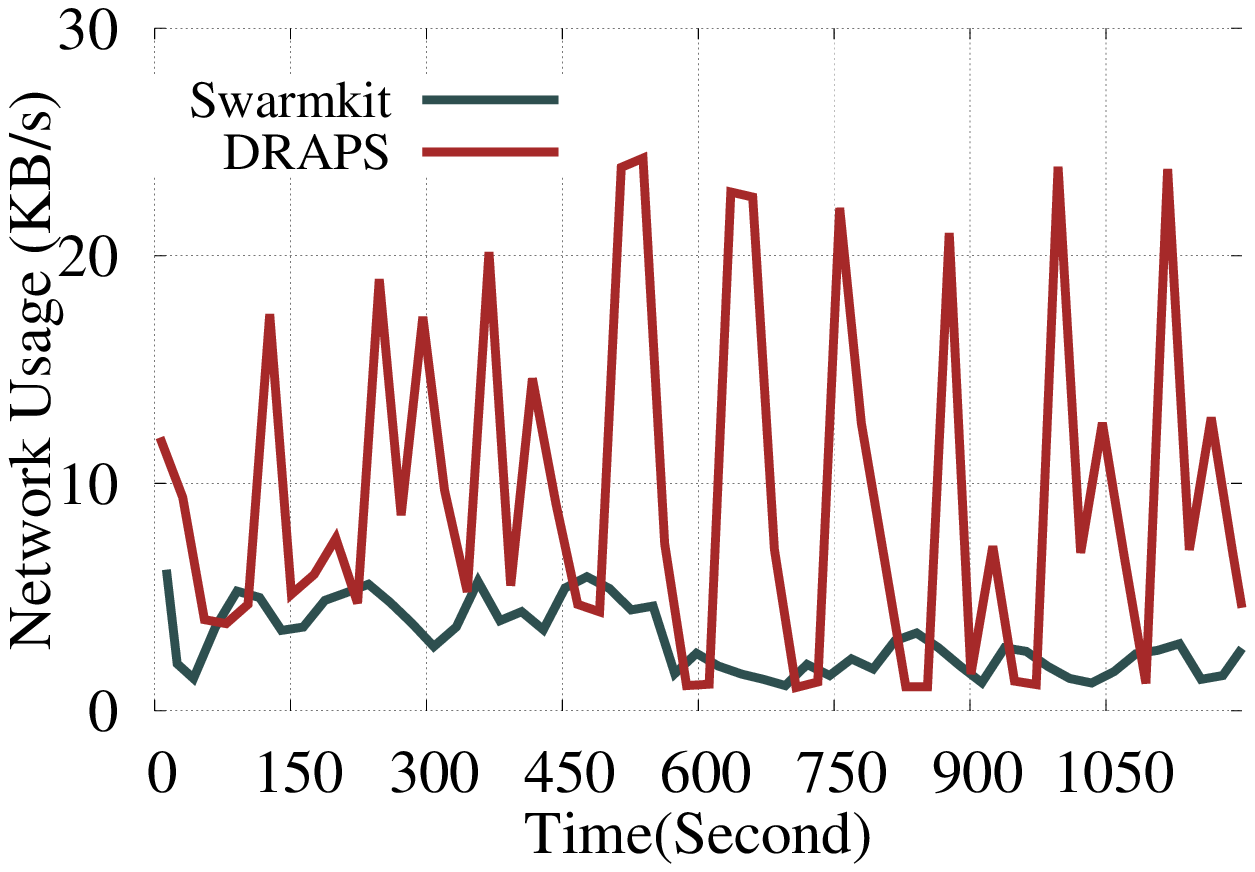}
\caption{Network usage comparison on worker 3}
\label{fig:25net} 
\end{minipage}
\end{figure}
\end{comment}

\begin{figure}[ht]
%\hspace*{0.8in}
   \centering
%\vspace{-0.1in}
      \begin{minipage}[t]{0.48\linewidth}
\centering
      \includegraphics[width=\linewidth]{25containernum}
      \caption{Number of containers on each worker}
      \label{fig:num} 
      \end{minipage} %
      %%%%%%%%%%%%%%%%%%%%%%%%%%%%%%%%
      \begin{minipage}[t]{0.48\linewidth}
\centering
      \includegraphics[width=\linewidth]{25net}
      \caption{Network consumption comparison on worker 3}
      \label{fig:25net}
      \end{minipage} %
%\caption{Resource demonds under different workloads on four services, MySQL, Tomcat, YUM, PI.}
\end{figure}

While \sol~achieves better performance, it introduces more data transfers between managers and workers through heartbeat messages.
Fig~\ref{fig:25net} plots the network consumption of Swarmkit and \sol~ on worker 3, which hosts both a manager and a worker. 
As expected, \sol~ consumes more bandwidth than Swarmkit due to the enhanced heartbeat messages includes more statistical information resource
usages of containers. Considering the distributed architecture, the system can have multiple managers and each of them in charge of controllable
number of workers, the increase of bandwidth consumption that brought by \sol~ is reasonable.

\begin{figure*}[ht]
%\hspace*{0.8in}
   \centering
%\vspace{-0.1in}
      \begin{subfigure}[t]{0.32\linewidth}
\centering
      \includegraphics[width=\linewidth]{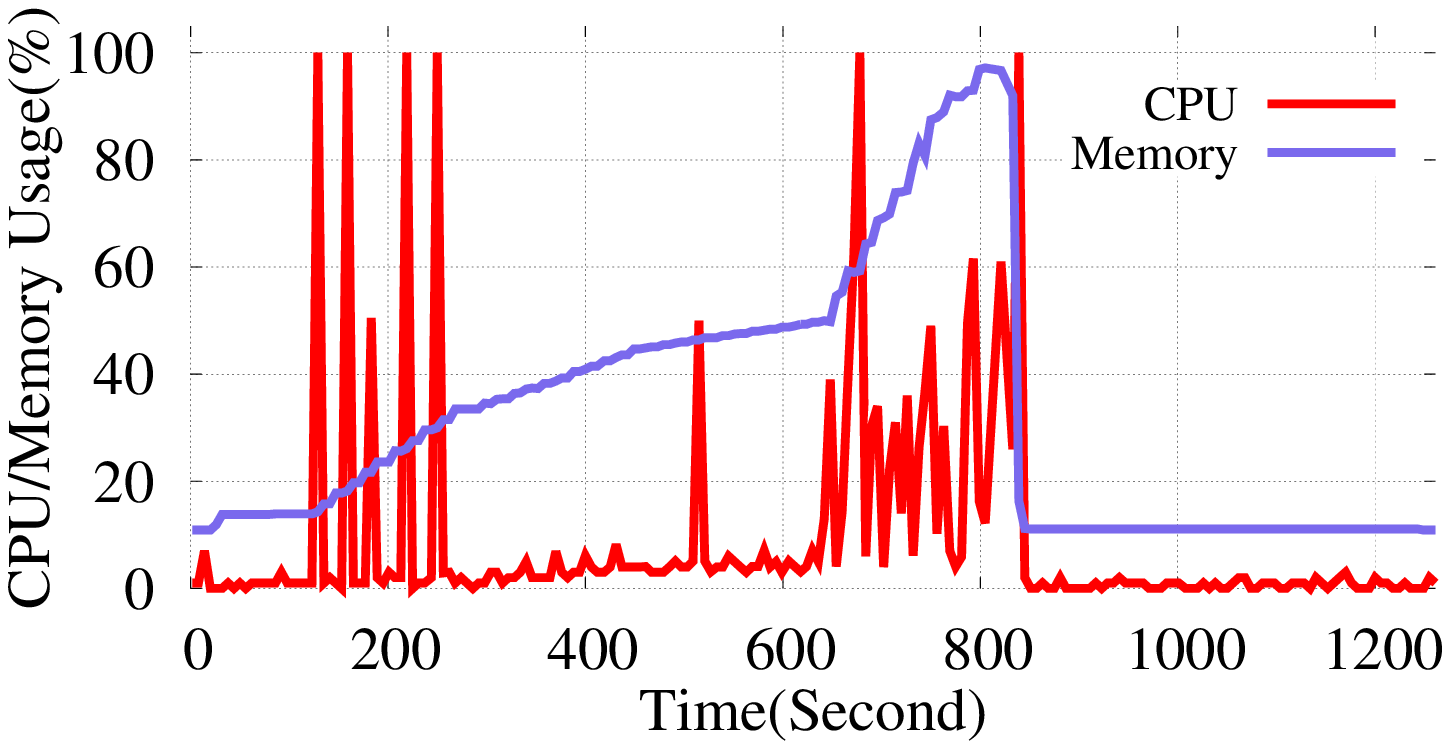}
      \vspace{-0.15in}
      \caption{Worker 1 (Spread, 140)}
      \label{fig:worker1}
      \end{subfigure} %
      ~%%%%%%%%%%%%%%%%%%%%%%%%%%%%%%%%
      \begin{subfigure}[t]{0.32\linewidth}
\centering
      \includegraphics[width=\linewidth]{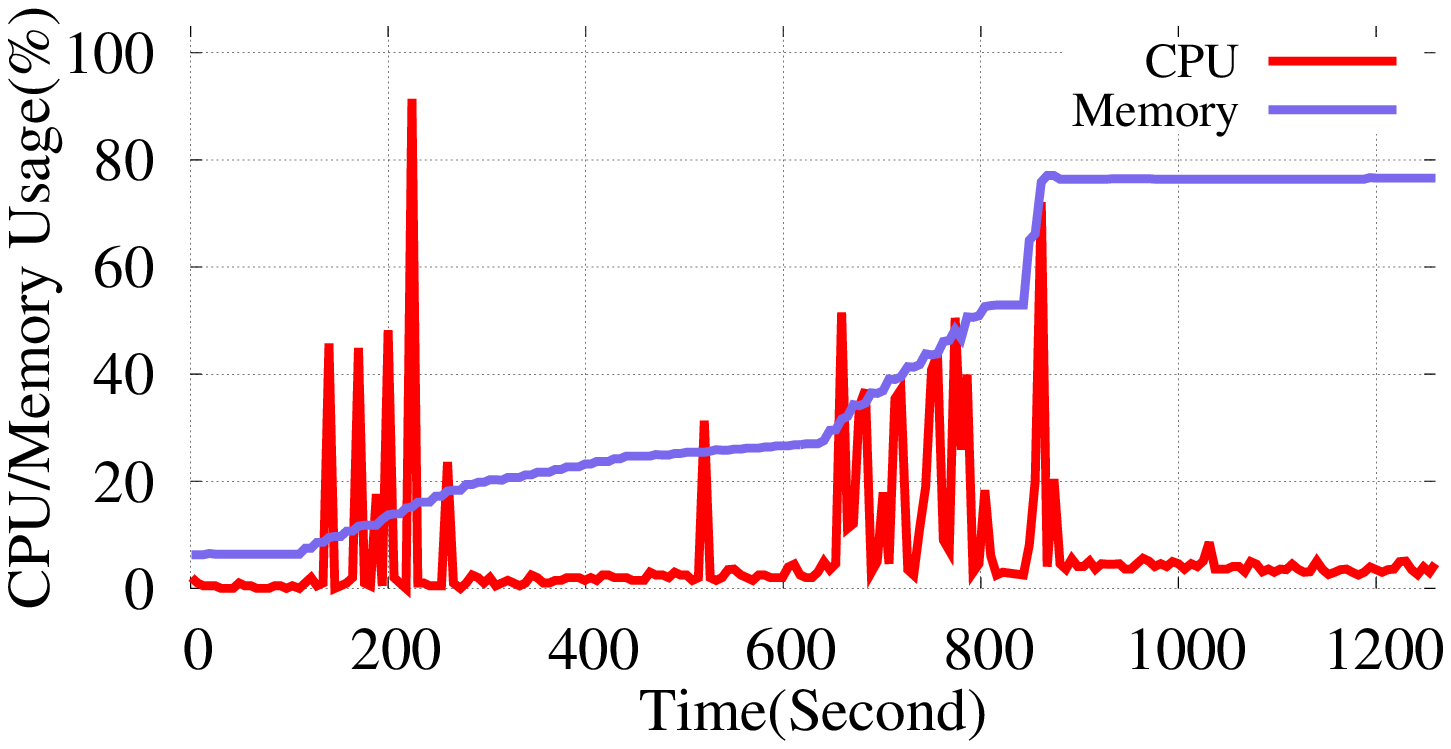}
      \vspace{-0.15in}
      \caption{Worker 2 (Spread, 140)}
      \label{fig:worker2}
      \end{subfigure} %
      ~
      \begin{subfigure}[t]{0.32\linewidth}
\centering
      \includegraphics[width=\linewidth]{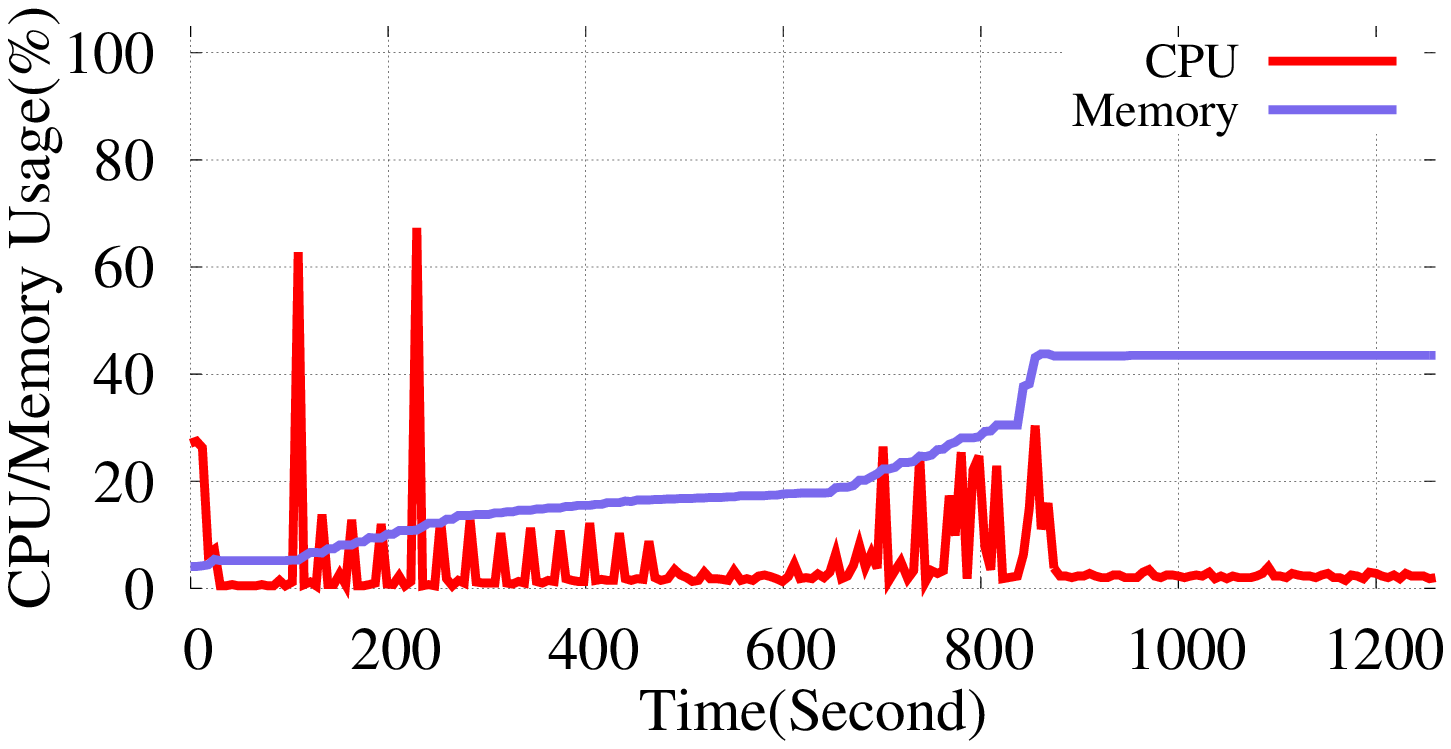}
      \vspace{-0.15in}
      \caption{Worker 3 (Spread, 140)}
      \label{fig:worker3}
      \end{subfigure} %
      
   \centering
%\vspace{-0.1in}
      \begin{subfigure}[t]{0.32\linewidth}
\centering
      \includegraphics[width=\linewidth]{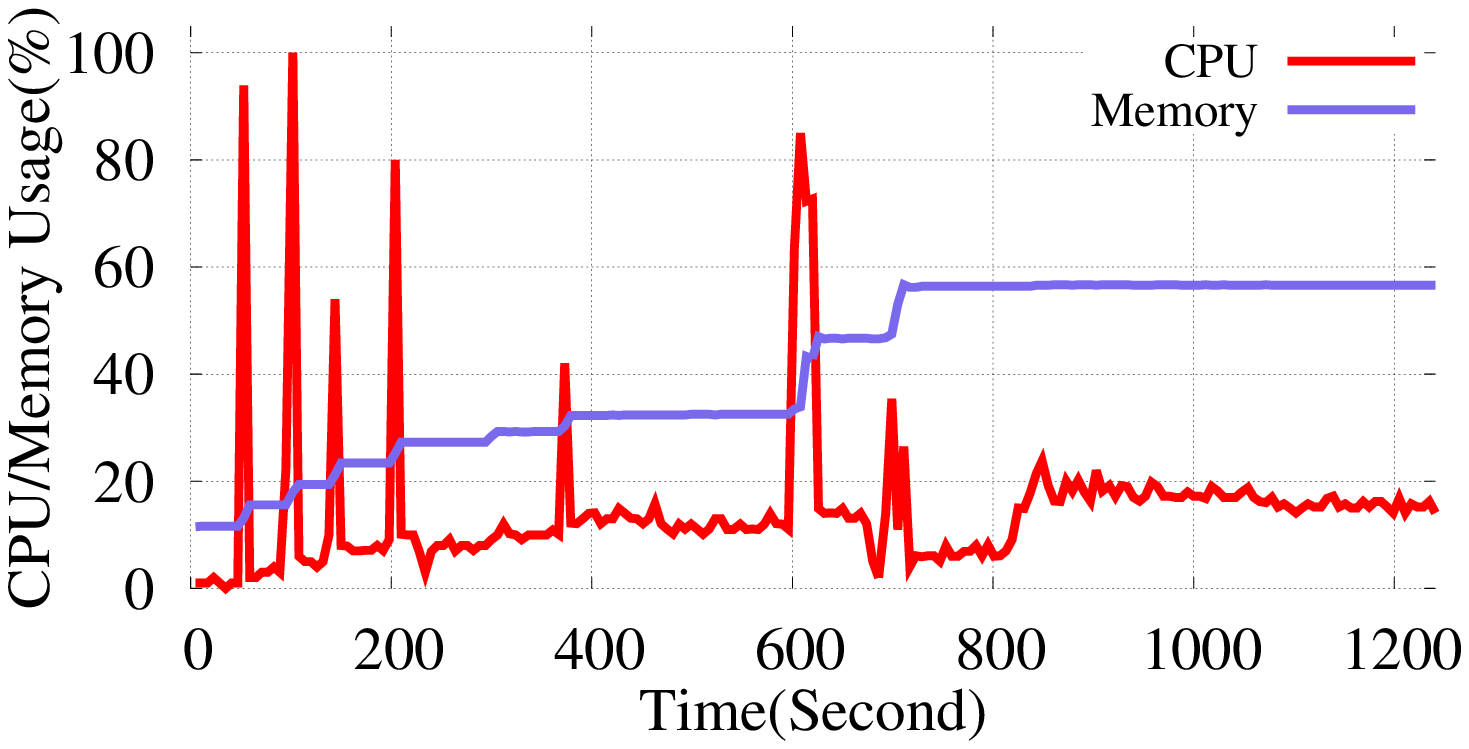}
      \vspace{-0.15in}
      \caption{Worker 1 (\sol, 140)}
      \label{fig:worker1g}
      \end{subfigure} %
      ~%%%%%%%%%%%%%%%%%%%%%%%%%%%%%%%%
      \begin{subfigure}[t]{0.32\linewidth}
\centering
      \includegraphics[width=\linewidth]{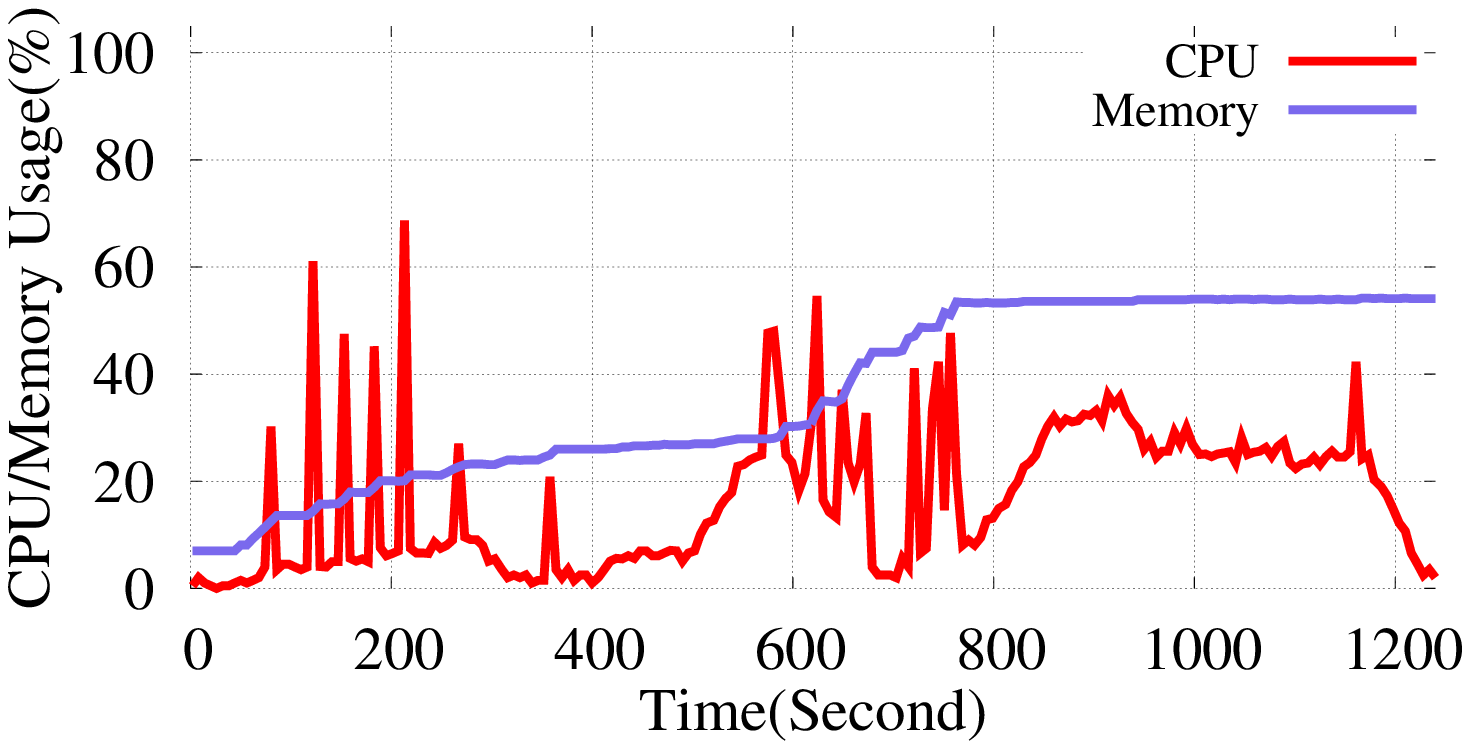}
      \vspace{-0.15in}
      \caption{Worker 2 (\sol, 140)}
      \label{fig:worker2g}
      \end{subfigure} %
      ~
      \begin{subfigure}[t]{0.32\linewidth}
\centering
      \includegraphics[width=\linewidth]{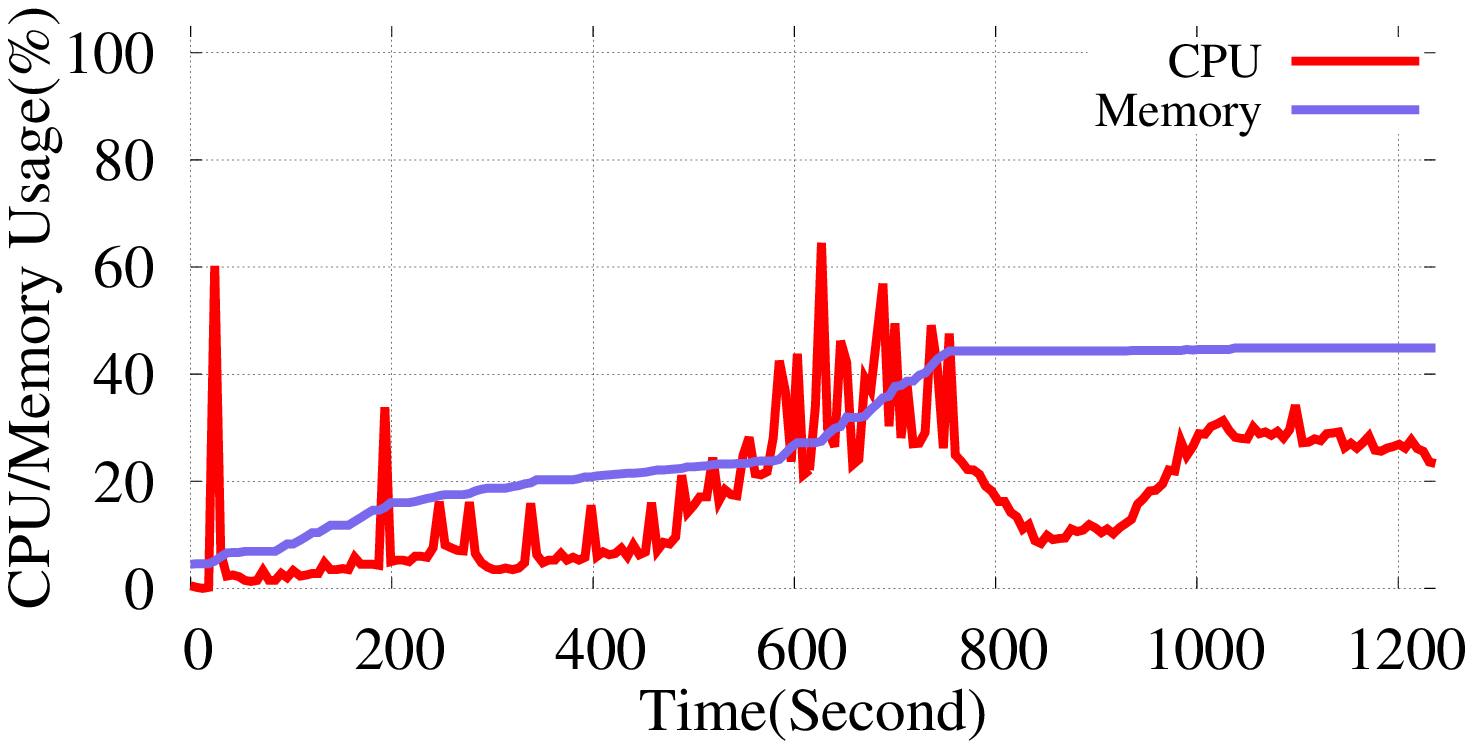}
      \vspace{-0.15in}
      \caption{Worker 3 (\sol, 140)}
      \label{fig:worker3g}
      \end{subfigure} %
      
\caption{Memory and CPU resources usage comparison between Spread and \sol~placement scheme (140 containers)}
\label{fig:noworkload35}
\end{figure*}

Next, we conduct the same experiments with 40\% more containers to test the scalability of \sol.
Fig~\ref{fig:noworkload35} plots the system performance with 140 Docker containers. Comparing the figures, the first impression is that
on Fig~\ref{fig:worker1}, the usages suddenly drop from 95.2\% to 11.1\% for memory and 100\% to 0 for CPU.
The reason lies in the fact that, at time 726, the memory becomes bottlenecked on work 1 with Spread scheme.
However, the manager does not award this situation on worker 1, and assign a new container to it. Worker 1 fails to 
start the new container, and drains the memory, which results in the death of all containers on it. The Docker engine
decides to kill them all when it can not communicate with them. 
On the other hand, \sol~considers dynamic resources usages on workers, and it stops assigning task to a worker if it has already overwhelming.
It is shown on Fig~\ref{fig:worker1g} that the usages of memory and CPU remains at 46.3\% and 18.8\% for worker 1 with \sol.
While worker 2 with Spread still runs smoothly at the end of the testing, its memory usage is at a high level, 76.6\%, comparing to work 2 with \sol~
the value is 54.1\%.

\subsubsection{Loaded containers}
Besides idle containers, we set up a mix environment that includes both idle and loaded containers. 
If clients are generating workloads to the services on the running containers, we call it loaded containers.
Evidenced by Fig.~\ref{fig:understand}, we know that loaded containers consume more resources than idle ones. In addition,
the usage pattern of a loaded container changes along with the workload.
Fig~\ref{fig:120workload} plots the memory usage and number of containers on Worker-1. 
For the experiments running with Spread, it drains the memory at time 825s that the memory usage drops from 98.5\% to 11.9\%.
Simultaneously, the number of running containers on worker-1 drops from 44 to 9 and then, to 0 at time 825s and 837s.
This is because the docker engine kills all containers when the memory is not enough to maintain the system itself.
Due to less containers on worker-1 with \sol~(44 v.s 24), it runs normally throughout the entire experiments.
Fig~\ref{fig:iowait} shows the value of I/O wait in percentage, which measures the percent of time the CPU is idle, but waiting for an I/O to complete.
It shows a similar trend that at time 849s the value drops to 0 for Spread, while \sol~maintains stable performance.
%\begin{comment}
\begin{figure}[ht]
%\hspace*{0.8in}
   \centering
%\vspace{-0.1in}
      \begin{minipage}[t]{0.48\linewidth}
\centering
      \includegraphics[width=\linewidth]{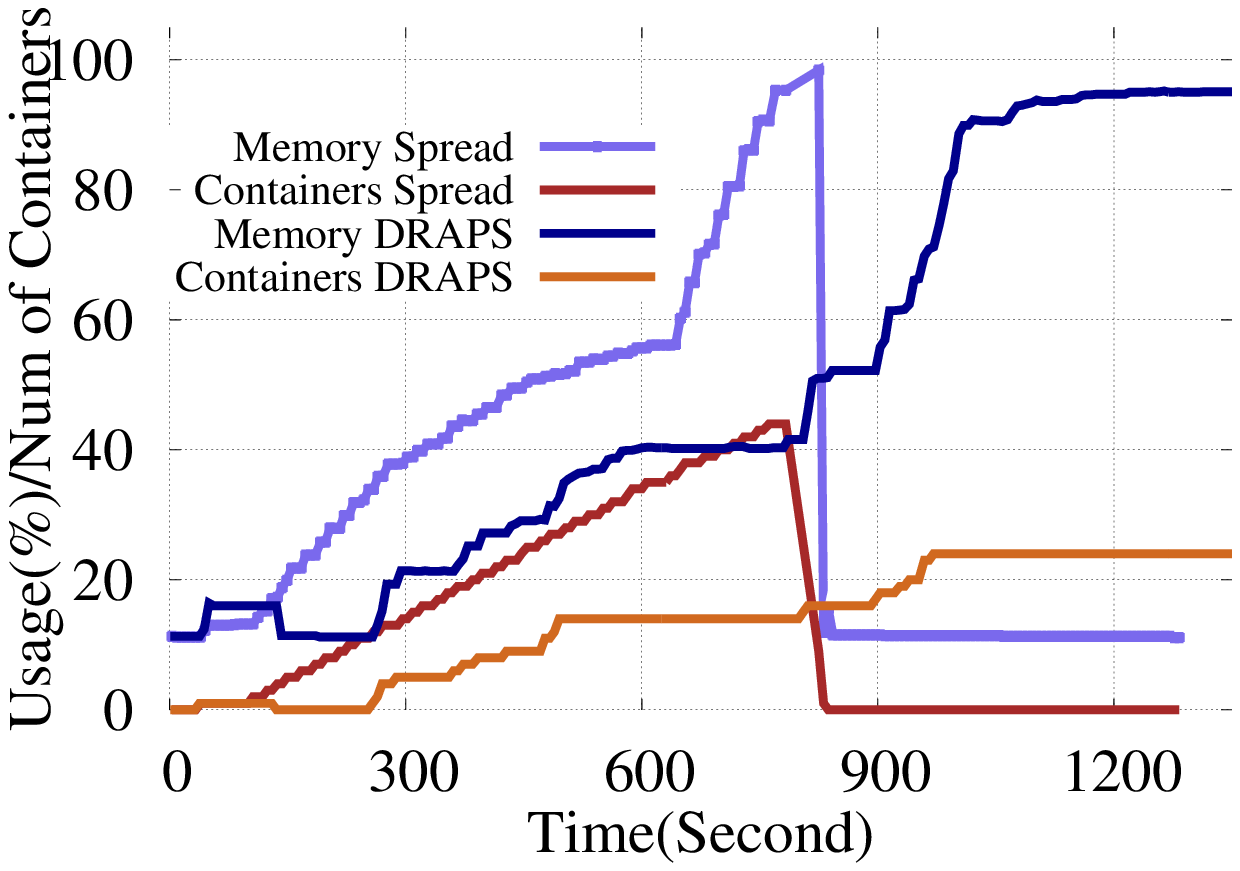}
      \caption{Memory usage and container number on worker1}
      \label{fig:120workload} 
      \end{minipage} %
      %%%%%%%%%%%%%%%%%%%%%%%%%%%%%%%%
      \begin{minipage}[t]{0.48\linewidth}
\centering
      \includegraphics[width=\linewidth]{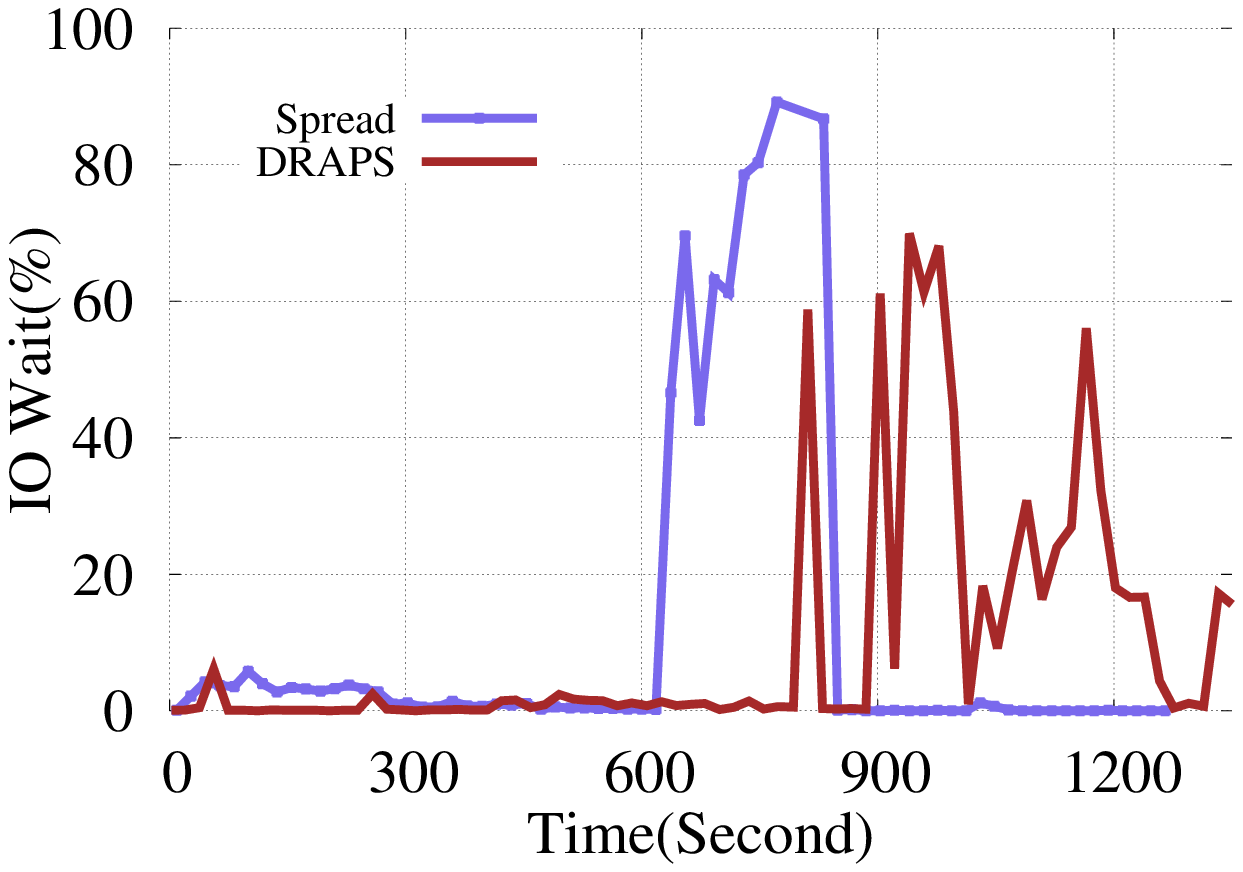}
      \caption{Value of I/O Wait on worker1}
      \label{fig:iowait}
      \end{minipage} %
%\caption{Resource demonds under different workloads on four services, MySQL, Tomcat, YUM, PI.}
\end{figure}
%\end{comment}

\section{Conclusion}
This paper studies the container placement strategy in a heterogeneous cluster.
We target on distributing containers to the worker nodes with the best available resources.
In this paper, we develop \sol, which considers various resource demands from containers and 
current available resources on each node. We implemented \sol~on Docker Swarmkit platform and 
conducted extensive experiments. 
The results show a significant improvement on the 
system stability and scalability when comparing with the default Spread strategy.

\bibliographystyle{myunsrt}
\bibliography{routing}

\end{document}